\newtheorem{theorem}{Theorem}[section]
\newtheorem{lemma}[theorem]{Lemma}
\newtheorem{corollary}[theorem]{Corollary}
\theoremstyle{definition}
\newtheorem*{problem}{Problem}
\theoremstyle{remark}
\newtheorem{remark}[theorem]{Remark}
\title{An FPT Algorithm for the Exact Matching Problem and NP-hardness of Related Problems}
\author{
  Hitoshi Murakami\thanks{Independent Researcher (Graduated from Osaka University on March 2024), Japan.} \and
  Yutaro Yamaguchi\thanks{Osaka University, Japan. Email: \texttt{yutaro.yamaguchi@ist.osaka-u.ac.jp}}}
\date{\empty}
\begin{document}
\maketitle
\thispagestyle{empty}

\begin{abstract}
The exact matching problem is a constrained variant of the maximum matching problem: given a graph with each edge having a weight $0$ or $1$ and an integer $k$, the goal is to find a perfect matching of weight exactly $k$.
Mulmuley, Vazirani, and Vazirani (1987) proposed a randomized polynomial-time algorithm for this problem, and it is still open whether it can be derandomized.
Very recently, El Maalouly, Steiner, and Wulf (2023) showed that for bipartite graphs there exists a deterministic FPT algorithm parameterized by the (bipartite) independence number.
In this paper, by extending a part of their work, we propose a deterministic FPT algorithm in general parameterized by the minimum size of an odd cycle transversal in addition to the (bipartite) independence number.
We also consider a relaxed problem called the correct parity matching problem, and show that a slight generalization of an equivalent problem is NP-hard.
\end{abstract}

\paragraph{Keywords}
Exact matching problem, Fixed-parameter tractability, Alternating cycles, Parity constraint.

\clearpage
\setcounter{page}{1}

\section{Introduction}

The \emph{maximum matching problem} is a fundamental problem in combinatorial optimization: given a graph, the goal is to find a matching of maximum size.
This problem is well-known to be polynomial-time solvable by deterministic algorithms, initiated by Edmonds \cite{Edmonds1965}.

The \emph{exact matching problem (EM)} is a constrained variant.
We say that a graph is \emph{0/1-weighted} if each edge has a weight $0$ or $1$, and define the weight of an edge subset (a matching, a path, a cycle, etc.) as the sum of the weights of edges in it.
The problem is stated as follows.

\begin{problem}[Exact Matching (EM)]
\mbox{ }
\begin{description}
\setlength{\itemsep}{0mm}
\item[Input:] A 0/1-weighted graph $G$ and an integer $k$.
\item[Task:] Determine whether $G$ has a perfect matching of weight $k$.
\end{description}
\end{problem}

This problem was introduced by Papadimitriou and Yannakakis~\cite{PY1982}, and is known to be polynomial-time solvable by a \emph{randomized}  algorithm proposed by Mulmuley, Vazirani, and Vazirani~\cite{MVV1987}.
It is, however, still widely open whether there exists a \emph{deterministic} polynomial-time algorithm or not, although several special cases have been solved, e.g., complete or comlete bipartite graphs~\cite{Karzanov1987, YMS2002, GS2011}, $K_{3,3}$-minor-free graphs~\cite{Yuster2012}, and bounded-genus graphs~\cite{GL1999}.
There are also several research directions such as relaxation, approximation, and clarifying relations with other problems~\cite{DEW2023, ElMaalouly2023, ES2022, ESW2023, Yuster2012}; see \cite{EHW2024} for more details on its history.

\subsection{FPT Algorithms for Exact Matching Problem}
For a parameter $k$ related to the input, a problem is said to be \emph{fixed-parameter tractable (FPT)} if there exists and a computable function $f \colon \mathbb{Z} \to \mathbb{Z}$ such that the problem can be solved in $f(k) \cdot n^{O(1)}$ time, where $n$ is the input size.
We also call an algorithm \emph{FPT (parameterized by $k$)} if it admits such a computational time bound.

Very recently, El Maalouly, Steiner, and Wulf~\cite{ESW2023} proposed a deterministic FPT algorithm for EM in bipartite graphs parameterized by the bipartite independence number.
As the \emph{independence number} of a graph is defined as the maximum size of an independent set, the \emph{bipartite independence number} of a bipartite graph is defined as the maximum size of a \emph{balanced} independent set divided by $2$, i.e., the maximum integer $\beta$ such that the bipartite graph admits an independent set of size $2\beta$ taking exactly $\beta$ vertices from each color class.
This result can be regarded as an extension of the solution for complete bipartite graphs, which always have the bipartite independence number $0$.

\begin{theorem}[El Maalouly et al.~\cite{ESW2023}]\label{thm:ESW2023}
The exact matching problem in bipartite graphs can be solved by a deterministic FPT algorithm parameterized by the bipartite independence number.
\end{theorem}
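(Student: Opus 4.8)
The plan is to reduce EM to a search for weight-adjusting alternating cycles, and then to exploit the density forced by a small bipartite independence number to bound the combinatorial complexity of the structures we must search for. First, using deterministic weighted bipartite matching, compute a minimum-weight perfect matching $M_{\min}$ and a maximum-weight perfect matching $M_{\max}$ (if $G$ has no perfect matching at all, or if $k \notin [w(M_{\min}), w(M_{\max})]$, output \textsc{no}). Otherwise fix a perfect matching $M$ with $w(M) \le k$ and set $D := k - w(M) \ge 0$. Every perfect matching $M'$ satisfies $M \triangle M' = C_1 \cup \dots \cup C_t$, a vertex-disjoint union of $M$-alternating cycles, with $w(M') - w(M) = \sum_i \delta(C_i)$ where $\delta(C) := w(C \setminus M) - w(C \cap M)$. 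So EM reduces to deciding whether there is a vertex-disjoint family of $M$-alternating cycles with $\sum_i \delta(C_i) = D$.

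The heart of the argument is to show that such a family, if it exists, may be replaced by one whose total number of vertices is bounded by a function $g(\beta)$ of the bipartite independence number alone. Two reductions feed into this. On one hand, one argues that $D$ itself may be assumed bounded by some $h(\beta)$: the set of achievable perfect-matching weights should be shown to be a full interval away from its two endpoints — any ``interior'' value of $k$ is realized by interpolating along $M_{\min} \triangle M_{\max}$ and peeling off alternating cycles one at a time — so only $O(h(\beta))$ near-endpoint values of $k$ require genuine work. On the other hand, one needs a shortcutting lemma: if the alternating structure touches more than $2\beta + 2$ vertices of one color class within a short window, then some balanced $(\beta+1, \beta+1)$ subset of those vertices must span an edge $e$ of $G$ (by definition of $\beta$); since $G$ is bipartite, the two endpoints of $e$ lie at odd distance along the cycle, so $e$ together with its matched edges can be used to reroute or split the alternating structure into strictly shorter alternating cycles while keeping track of how $\sum_i \delta(C_i)$ changes. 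Iterating, we reduce to structures supported on $g(\beta)$ vertices.

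Once an affirmative instance is known to admit a solution supported on $g(\beta)$ vertices, find it by color coding: color the vertices with $g(\beta)$ colors using a perfect hash family (for derandomization), and, for each coloring, run a dynamic program over color subsets that looks for a vertex-disjoint ``colorful'' family of $M$-alternating cycles achieving total $\delta$ equal to the target; combined over all values of $k$ near the endpoints, this runs in $f(\beta) \cdot n^{O(1)}$ time.

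The main obstacle is the second step. The shortcutting operation must simultaneously (a) strictly reduce the size of the alternating structure, (b) keep the untouched part of the matching intact so the result is still a perfect matching, and (c) control the change in $\sum_i \delta(C_i)$ precisely enough that the target $D$ remains reachable; reconciling all three — in particular, handling the case where the needed change is distributed over many short disjoint cycles rather than one long cycle, and pinning down the exact dependence on $\beta$ in both $g$ and $h$ — is where the real combinatorics lies.
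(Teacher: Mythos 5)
Your proposal does not follow the route this statement actually takes: in the paper (and in~\cite{ESW2023}) Theorem~\ref{thm:ESW2023} is not proved by a direct search for a small augmenting structure, but by composing Theorem~\ref{thm:ESW2023_reduction} (an FPT reduction of EM to the parity-and-upper-bound relaxation BCPM, parameterized by the bipartite independence number) with Theorem~\ref{thm:ESW2023_BCPM} (a polynomial-time dynamic program for BCPM in bipartite graphs). More importantly, two steps of your outline have genuine gaps. First, the claim that the achievable perfect-matching weights form ``a full interval away from its two endpoints,'' justified by ``peeling off alternating cycles one at a time'' along $M_{\min}\triangle M_{\max}$, is unsupported: peeling realizes only the subset sums of the values $\delta(C_i)$, and subset sums need not cover an interval (for instance, all $\delta(C_i)$ may be even, in which case an entire parity class is unreachable no matter how dense the graph is). This parity obstruction is not a technicality to be absorbed into $h(\beta)$ --- it is exactly the reason the intermediate problem BCPM carries the constraint $k'\equiv k\pmod{2}$, and the correct version of your ``interior values are easy'' statement (see~\cite{ES2022,ESW2023}) holds only within a fixed parity class and only up to an additive error depending on $\beta$, which is then closed by an FPT search. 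Relatedly, $D=k-w(M)$ is small only if you can start from a perfect matching whose weight is close to $k$ from below with the right parity; finding such an $M$ is essentially BCPM itself, which you cannot assume solved at that point.

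Second, the step you yourself flag as ``the main obstacle'' is in fact the entire theorem, and it is not resolved in your sketch. The assertion that a yes-instance admits a witness supported on $g(\beta)$ vertices is precisely what the shortcutting lemma would have to deliver, and requirement (c) --- controlling the exact change in $\sum_i\delta(C_i)$ under rerouting --- is where such arguments fail for the exact weight and succeed only for its parity; this is again why the known reduction targets BCPM rather than EM. Without a precise statement and proof of that lemma, the color-coding phase has no bounded object to search for, so the proposal as written is not a proof. The realistic way to complete the plan is the paper's: prove the reduction to BCPM (where rerouting only needs to preserve parity and not increase weight beyond $k$), and then solve BCPM on bipartite graphs exactly by dynamic programming, where neither color coding nor any support bound on the witness is needed.
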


Their result consists of two ingredients.
One is an FPT reduction of EM to a relaxed problem, the \emph{bounded correct parity matching problem (BCPM)} stated as follows, parameterized by the independence number in general and by the bipartite independence number for bipartite graphs.

\begin{problem}[Bounded Correct Parity Matching (BCPM)]
\mbox{ }
\begin{description}
\setlength{\itemsep}{0mm}
\item[Input:] A 0/1-weighted graph $G$ and an integer $k$.
\item[Task:] Determine whether $G$ has a perfect matching of weight $k'$ such that $k' \leq k$ and $k' \equiv k \pmod{2}$.
\end{description}
\end{problem}

\begin{theorem}[El Maalouly et al.~\cite{ESW2023}]\label{thm:ESW2023_reduction}
The exact matching problem can be reduced to the bounded correct parity matching problem by a deterministic FPT algorithm parameterized by the independence number in general and by the bipartite independence number for bipartite graphs, where the input graph does not change.
\end{theorem}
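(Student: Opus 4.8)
The plan is to give a Turing reduction: to decide an instance $(G,k)$ of EM we make oracle calls to BCPM, always on the same graph $G$ but with varying right-hand sides, together with an amount of purely combinatorial work on $G$ bounded by a computable function of the (bipartite) independence number $\alpha$. We may assume $G$ has a perfect matching and $0\le k\le |V(G)|/2$, as otherwise the answer is ``no''. Call BCPM on $(G,k)$: if it answers ``no'', then $G$ has no perfect matching of weight at most $k$ with the parity of $k$, hence none of weight exactly $k$, so we return ``no''. Otherwise, by the standard self-reducibility trick --- for each edge in turn, test with one further BCPM call whether the current partial selection extends to a solution --- we recover an explicit perfect matching $M_0$ with $w(M_0)=k_0\le k$ and $k_0\equiv k\pmod 2$. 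What remains is to decide whether $G$ has a perfect matching of weight \emph{exactly} $k$, and this is where $\alpha$ enters.

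Everything is driven by the alternating-cycle picture. For any perfect matching $M$ the symmetric difference $M\triangle M_0$ is a vertex-disjoint union of $M_0$-alternating cycles, and toggling one such cycle $C$ produces a perfect matching of weight $w(M_0)+\delta(C)$, where $\delta(C)=w(C\setminus M_0)-w(C\cap M_0)$ satisfies $\delta(C)\equiv w(C)\pmod 2$ but may have magnitude as large as $|C|/2$. The key structural lemma I would prove is that the bounded independence number lets us restrict attention to \emph{short} alternating cycles: since every set of more than $\alpha$ vertices spans an edge, any $M_0$-alternating cycle longer than some threshold $g(\alpha)$ has a chord; a chord joining two vertices of opposite type along the cycle (one incident with its matching edge ``to the left'', the other ``to the right'') splits the cycle into a strictly shorter $M_0$-alternating cycle and a leftover piece that is alternating with respect to the \emph{toggled} matching, so the two pieces may be applied in sequence without changing which perfect matchings --- and hence which weights --- are reachable. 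Iterating trims everything down to alternating cycles of length at most $g(\alpha)$, i.e.\ weight changes of magnitude at most $g(\alpha)/2$. In the bipartite case this is cleaner, because every chord of an alternating cycle automatically joins the two color classes and is thus of the required type, so the bipartite independence number suffices and $g$ may be taken smaller.

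With only short alternating structures left to consider, I would finish by computing which perfect-matching weights are reachable from $M_0$ through successive toggles of vertex-disjoint short alternating cycles, answering ``yes'' to EM iff $k$ is among them. Each toggle changes the current weight by a bounded amount, so the challenge is to make this exploration run in FPT rather than merely $n^{O(g(\alpha))}$ time --- for instance by showing that at each stage only a bounded number of ``representative'' short cycles ever need to be tried, again exploiting that small $\alpha$ forces $G$ to be dense and structured --- while still recovering the full set of attainable weights. The number of BCPM calls is polynomial and all the extra work is FPT in the chosen parameter, and the graph is never modified, as required.

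The crux --- and the main obstacle --- is the structural lemma of the second paragraph, together with turning the reachability search into a genuine FPT procedure. The delicate point in the lemma is securing a splitting chord of the \emph{right} type rather than just some chord: in the non-bipartite case one may have to inspect a carefully chosen transversal of the matching edges of the cycle, or fold a ``wrong-type'' chord together with its neighboring matching edges into a usable alternating cycle, and then argue that the trimming both terminates and preserves the set of attainable perfect-matching weights. By comparison, the reduction skeleton, the self-reduction to $M_0$, and the bookkeeping around the search are routine.
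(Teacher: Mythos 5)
This theorem is imported from El Maalouly, Steiner, and Wulf~\cite{ESW2023}; the paper gives no proof of its own, so your proposal can only be judged on its internal merits. The overall skeleton you describe --- use a BCPM call to rule out the trivial no-case and to obtain a perfect matching $M_0$ of the correct parity and weight at most $k$, then try to raise the weight to exactly $k$ by toggling $M_0$-alternating cycles, with the independence number controlling the cycle lengths via chord arguments --- is indeed in the spirit of the alternating-cycle machinery that underlies the cited work. But as written the proposal has a genuine gap, and you say so yourself: the two steps that carry all the difficulty are only announced, not proven. First, the structural lemma is not established. A chord of an $M_0$-alternating cycle does not in general split it into two alternating cycles: whether an arc of the cycle together with the chord closes up into an alternating cycle depends on the positions of the chord's endpoints relative to the matching edges, and in a non-bipartite graph both arcs can fail simultaneously. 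You acknowledge that securing a chord ``of the right type'' is the delicate point, but you give no argument that one exists once the cycle exceeds $g(\alpha)$, no definition of $g$, and no proof that the trimming preserves the set of attainable perfect-matching weights (toggling the two pieces in sequence passes through an intermediate matching, and the induction needed to conclude that only short cycles ever matter is exactly the content of the missing lemma). Second, the concluding reachability search is not an algorithm: deciding which weights are attainable by toggling vertex-disjoint short alternating cycles is not obviously easier than the problems this very paper proves NP-hard in Section~4 (OACe), and the claim that ``only a bounded number of representative short cycles ever need to be tried'' is asserted without any mechanism.

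A smaller but concrete flaw: your self-reducibility step for extracting an explicit $M_0$ requires BCPM calls on graphs with edges forced in or deleted, which contradicts the theorem's stipulation that the input graph of every BCPM instance is $G$ itself; this stipulation is essential, since the whole point of the reduction is to feed the resulting BCPM instances to an algorithm (Theorem~\ref{thm:ESW2023_BCPM}) whose tractability depends on structural properties of $G$ such as bipartiteness. In summary, the reduction skeleton is plausible and roughly aligned with the known approach, but the proposal defers precisely the parts that constitute the proof.
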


The other is a deterministic polynomial-time algorithm for BCPM in bipartite graphs.

\begin{theorem}[El Maalouly et al.~\cite{ESW2023}]\label{thm:ESW2023_BCPM}
There exists a deterministic polynomial-time algorithm for the bounded correct parity matching problem in bipartite graphs.
\end{theorem}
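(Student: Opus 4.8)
The plan is to reduce the problem to computing, for each parity class, the minimum weight of a perfect matching, and to obtain the relevant one by a shortest-cycle computation in an auxiliary digraph. First I would dispose of the trivial cases: if $G$ has no perfect matching the answer is negative, so assume one exists and compute (by the Hungarian method) a minimum-weight perfect matching $M_0$, of weight $w_0$. If $w_0 > k$ the answer is negative, and if $w_0 \equiv k \pmod 2$ it is affirmative, witnessed by $M_0$ itself. The remaining case is $w_0 \le k$ and $w_0 \not\equiv k \pmod 2$; here the task is exactly to decide whether there is a perfect matching of weight at most $k$ whose weight is $\not\equiv w_0 \pmod 2$, equivalently to compute $w_{\mathrm{alt}}$, the minimum weight of such a perfect matching (with $w_{\mathrm{alt}} = +\infty$ if none exists), and to answer affirmatively iff $w_{\mathrm{alt}} \le k$.

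To compute $w_{\mathrm{alt}}$ I would use the alternating-cycle structure. For any perfect matching $M_1$, the symmetric difference $M_0 \triangle M_1$ is a vertex-disjoint union of $M_0$-alternating cycles, and $w(M_1)$ equals $w_0$ plus the sum of the weight changes obtained by swapping along these cycles. Two facts are key: (i) the parity of the weight change along an $M_0$-alternating cycle $C$ equals the parity of the number of weight-$1$ edges on $C$, since swapping interchanges $C \cap M_0$ and $C \setminus M_0$ and $w(C \setminus M_0) - w(C \cap M_0) \equiv |C \cap E_1| \pmod 2$, where $E_1$ denotes the set of weight-$1$ edges; and (ii) since $M_0$ has minimum weight, every $M_0$-alternating cycle has non-negative weight change. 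Consequently $w_{\mathrm{alt}} = w_0 + \delta^{*}$, where $\delta^{*}$ is the minimum weight change over all $M_0$-alternating cycles carrying an odd number of weight-$1$ edges: a family of alternating cycles witnessing $w_{\mathrm{alt}}$ must by (i) contain at least one such cycle, and dropping the others only decreases the total change by (ii).

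It remains to compute $\delta^{*}$ in polynomial time, and here I would exploit bipartiteness. Writing $G = (U, V; E)$, build a digraph $D$ on $U$ with an arc $u \to u'$ for every edge $uv' \in E$ with $u'v' \in M_0$ and $u \ne u'$, of cost $w(uv') - w(u'v')$ and label $w(uv') + w(u'v') \bmod 2 \in \{0,1\}$. Then directed cycles of $D$ correspond to $M_0$-alternating cycles of $G$ in a cost-preserving way, with cycle label sum equal, mod $2$, to the number of weight-$1$ edges on the corresponding alternating cycle; thus $\delta^{*}$ is the minimum cost of a directed cycle of $D$ with odd label sum, and all directed cycles of $D$ have non-negative cost by (ii). To find it, pass to the standard parity cover $\hat D$ with vertex set $U \times \{0,1\}$, where an arc $u \to u'$ of label $\ell$ and cost $c$ yields arcs $(u,b) \to (u', b \oplus \ell)$ of cost $c$ for $b \in \{0,1\}$. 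A directed cycle of $\hat D$ projects to a closed walk of $D$ with even label sum, which decomposes into directed cycles of $D$ of non-negative cost, so $\hat D$ has no negative cycle and a shortest $(v,0)$-to-$(v,1)$ path is well defined and computable by Bellman–Ford. Since a closed walk of $D$ with odd label sum decomposes into directed cycles of which at least one has odd label sum and (by non-negativity of the rest) cost at most that of the walk, we get $\delta^{*} = \min_{v \in U} \operatorname{dist}_{\hat D}((v,0),(v,1))$, which is finite iff an odd-label cycle exists; running Bellman–Ford from each $(v,0)$ yields $\delta^{*}$ (or $+\infty$) in polynomial time, and $w_{\mathrm{alt}} = w_0 + \delta^{*}$ then settles the instance.

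The step I expect to require the most care is the passage to $D$ and its parity cover: one must check that the correspondence between $M_0$-alternating cycles of $G$ and directed cycles of $D$ is exact — this is where bipartiteness is essential, since in general graphs blossom structure spoils such a clean encoding — that the label bookkeeping really tracks the number of weight-$1$ edges mod $2$, and, most importantly, that the "minimum over closed walks" quantities returned by Bellman–Ford coincide with the intended "minimum over alternating cycles". This last point rests entirely on fact (ii), i.e. the absence of negative alternating cycles, which is precisely why it is crucial to start from a \emph{minimum-weight} perfect matching $M_0$. Everything else — bipartite perfect matching, the Hungarian algorithm, Bellman–Ford — is standard and polynomial.
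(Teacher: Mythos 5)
Your argument is correct, but it is not the paper's argument: the paper does not prove this theorem at all, it imports it from El Maalouly, Steiner, and Wulf~\cite{ESW2023} and describes that algorithm only as ``a standard dynamic programming approach for an equivalent problem.'' Your route is a genuinely different, self-contained one: reduce BCPM to computing the minimum weight $w_{\mathrm{alt}}$ of a perfect matching of the opposite parity, observe that relative to a \emph{minimum-weight} perfect matching $M_0$ every alternating cycle has non-negative weight change (so a witness family of alternating cycles can be pruned to a single odd one), and then find the cheapest odd-parity alternating cycle as a shortest $(v,0)$--$(v,1)$ path in the parity cover of the exchange digraph, where the absence of negative cycles makes Bellman--Ford valid and makes ``min over closed walks'' collapse to ``min over cycles.'' All the steps check out: the case analysis on $w_0$ versus $k$ is exhaustive, facts (i) and (ii) are right, the cycle--arc correspondence in the bipartite exchange digraph is exact, and the decomposition of an odd-label closed walk into cycles at least one of which is odd and no more expensive is sound. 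What your approach buys is transparency about where bipartiteness is used (the exchange digraph encodes alternating cycles faithfully only without blossoms, matching the paper's remark about why the result resists generalization) and a reduction to entirely standard primitives (Hungarian method, Bellman--Ford) rather than a bespoke DP. One small caveat: the paper allows non-simple graphs, and your exclusion of arcs with $u = u'$ discards length-two alternating cycles formed by an edge parallel to a matching edge; you should either admit the corresponding self-loops (with their cost and label) or check these $O(m)$ two-cycles separately before running the shortest-path computation.
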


Their algorithm is based on a standard dynamic programming approach for an equivalent problem.
It seems difficult to be generalized to general graphs (due to existence of so-called \emph{blossoms}), and it remains open whether BCPM can be deterministically solved in polynomial time or not.

In this paper, we try to fill the gap between general graphs and bipartite graphs by considering an \emph{odd cycle transversal}, which is a vertex subset intersecting all the odd cycles in the graph, or equivalently, whose removal makes the graph bipartite.
Our results are stated as follows.

\begin{theorem}\label{thm:FPT}
\mbox{ }
\begin{itemize}
\setlength{\itemsep}{0mm}
\item[(1)] The bounded correct parity matching problem can be solved by a deterministic FPT algorithm parameterized by the minimum size of an odd cycle transversal.
\item[(2)] The exact matching problem can be solved by a deterministic FPT algorithm parameterized by the independence number plus the minimum size of an odd cycle transversal.
The independence number can be sharpened as the bipartite independence number after removing a minimum odd cycle transversal.
\end{itemize}
\end{theorem}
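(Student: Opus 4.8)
The plan is to treat a minimum odd cycle transversal $S$ of $G$ as a small ``interface'' whose interaction with a perfect matching can be enumerated cheaply, thereby reducing both problems to the bipartite case, where Theorem~\ref{thm:ESW2023_BCPM} (for~(1)) and Theorem~\ref{thm:ESW2023} (for~(2)) are available as black boxes. Concretely, I would first compute a minimum odd cycle transversal $S$ of $G$ in FPT time using the classical odd-cycle-transversal algorithm (set $s := |S|$), so that $H := G - S$ is bipartite with colour classes $X$ and $Y$.

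Now fix a perfect matching $M$ of $G$ and look at how it meets $S$: it consists of a partial matching $M_S := \{\, e \in M : e \subseteq S \,\}$ on $S$, the set $S' := S \setminus V(M_S)$ of $S$-vertices that $M$ matches into $V(H)$, and, for each $v \in S'$, a bit recording whether its $M$-partner lies in $X$ or in $Y$; write $S' = S'_X \sqcup S'_Y$ accordingly. The crucial observation is that, although the partners themselves range over $n$ vertices, the triple $(M_S, S'_X, S'_Y)$ ranges over at most $s!\cdot 2^{s}$ possibilities, so we may enumerate all of them. For a fixed such \emph{type}, the perfect matchings of $G$ of that type are exactly $M_S$ together with a perfect matching of the auxiliary \emph{bipartite} graph $H^{\ast}$ whose colour classes are $L := X \cup S'_Y$ and $R := Y \cup S'_X$ and whose edge set is $E(H) \cup E_G(X,S'_X) \cup E_G(S'_Y,Y)$; it is bipartite precisely because we discard all edges inside $S$ and all $S'_X$--$Y$ and $S'_Y$--$X$ edges. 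Since the weight of any such matching is $w(M_S)$ (a constant depending only on the type) plus the weight of the $H^{\ast}$-part, the question ``does $G$ have a perfect matching of weight $k'$ with $k' \le k$ and $k' \equiv k \pmod 2$?'' becomes, for each type, a bounded correct parity matching instance on the bipartite graph $H^{\ast}$ with target $k - w(M_S)$; applying Theorem~\ref{thm:ESW2023_BCPM} to each of the $\le s!\cdot 2^{s}$ instances and answering yes iff some of them is a yes-instance proves~(1).

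For~(2) the enumeration is identical, but ``weight exactly $k$'' turns, for each type, into an exact matching instance on $H^{\ast}$ with target $k - w(M_S)$, which is solved by Theorem~\ref{thm:ESW2023}. To charge the running time to the claimed parameter I would bound $\tilde\beta(H^{\ast})$: if $I$ is a balanced independent set of $H^{\ast}$ with $\gamma$ vertices in each of $L$ and $R$, then $I \cap V(H)$ is independent in $H$ (as $H^{\ast}[V(H)] = H$) and has at least $\gamma - |S'_Y|$ vertices in $X$ and at least $\gamma - |S'_X|$ in $Y$, so $H$ contains a balanced independent set of size $2\bigl(\gamma - \max(|S'_X|,|S'_Y|)\bigr) \ge 2(\gamma - s)$; hence $\tilde\beta(H^{\ast}) \le \tilde\beta(G-S) + s$. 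Thus the whole procedure runs in $g\bigl(\tilde\beta(G-S) + s\bigr)\cdot n^{O(1)}$, which is FPT in the bipartite independence number after removing $S$ plus $|S|$; since any balanced independent set of the induced subgraph $G-S$ is in particular an independent set of $G$, we have $\tilde\beta(G-S) \le \alpha(G)$, so this is also FPT in $\alpha(G) + s$, giving both stated forms of~(2).

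The main obstacle is exactly the step that keeps this FPT rather than XP: one must \emph{not} guess the $\Theta(n^{s})$-many choices of where the $S$-vertices are matched, but only the bounded data $(M_S, S'_X, S'_Y)$, and then verify that this data can be folded into a genuinely bipartite auxiliary graph on which Theorems~\ref{thm:ESW2023} and~\ref{thm:ESW2023_BCPM} apply, and, for~(2), that passing to $H^{\ast}$ inflates the bipartite independence number by only an additive $O(s)$. Everything else — computing a minimum odd cycle transversal, constructing $H^{\ast}$, and the parity/threshold bookkeeping on the target weight — is routine.
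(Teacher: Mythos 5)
Your proof is correct, and it follows the same overall strategy as the paper---compute a minimum odd cycle transversal $S$ in FPT time, use it as a small interface to reduce the problem to FPT-many bipartite instances, solve those by Theorems~\ref{thm:ESW2023_BCPM} and~\ref{thm:ESW2023}, and bound the bipartite independence number of each instance by that of $G-S$ plus $|S|$ (your final estimate is essentially the paper's Lemma~\ref{lem:BIN})---but the decomposition itself is genuinely different. The paper enumerates only the $2^{|S|}$ two-sidings of the transversal: for each one it keeps \emph{all} vertices and forms the spanning bipartite subgraph $G_Y$ whose colour classes are $A$ together with the chosen part and $B$ together with the rest, retaining every crossing edge; the substance is then Lemma~\ref{lem:FPT}, the observation that any perfect matching of $G$ survives intact in some $G_Y$ because the condition for a matching edge to survive constrains only that edge's own endpoints, and these constraints are independent across the (vertex-disjoint) edges of a matching. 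You instead guess the full ``type'' of the matching on $S$---the internal matching $M_S$ plus the side-assignment $S'_X\sqcup S'_Y$---which gives a larger but still bounded enumeration ($s!\cdot 2^s$ versus $2^s$) and auxiliary graphs $H^{\ast}$ from which the internally matched pairs are deleted and whose $S$-incident edges are pruned to conform to the guess; in exchange, the bijection between perfect matchings of $G$ of a given type and perfect matchings of $H^{\ast}$ (with weights shifted by the constant $w(M_S)$) is immediate and needs no analogue of Lemma~\ref{lem:FPT}. A second, minor difference: you derive the first statement of~(2) from the second via $2\tilde\beta(G-S)\le\alpha(G)$, whereas the paper obtains it independently by composing Theorem~\ref{thm:ESW2023_reduction} with part~(1) applied to $G$ itself. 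Both routes are FPT in the stated parameters, so either establishes the theorem; the paper's version is the leaner enumeration, yours the more mechanical correctness argument.
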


Since the empty set is an odd cycle transversal for any bipartite graph, Theorem~\ref{thm:FPT} extends Theorems~\ref{thm:ESW2023_BCPM} and \ref{thm:ESW2023}.

\subsection{On Correct Parity Matching Problem}
We also consider a further relaxed problem, the \emph{correct parity matching problem (CPM)} stated as follows.

\begin{problem}[Correct Parity Matching (CPM)]
\mbox{ }
\begin{description}
\setlength{\itemsep}{0mm}
\item[Input:] A 0/1-weighted graph $G$ and an integer $k$.
\item[Task:] Determine whether $G$ has a perfect matching of weight $k'$ such that $k' \equiv k \pmod{2}$.
\end{description}
\end{problem}

El Maalouly, Steiner, and Wulf~\cite{ESW2023} also proposed a deterministic polynomial-time algorithm for this problem.
Their algorithm utilizes a linear algebraic trick with the aid of Lov\'{a}sz' algorithm~\cite{Lovasz1987} for finding a basis of the linear subspace spanned by perfect matchings.
It is elegant but heavily depends on the fact that we are only interested in the parity of the weight, and it seems difficult to obtain from it a promising idea to tackle EM.

Let us consider a ``purely graphic'' approach to CPM, which may give some hopeful idea for EM.
We first find a perfect matching $M$.
If the weight of $M$ has the same parity as $k$, we are done.
Otherwise, we solve the \emph{odd alternating cycle problem (OAC)} stated as follows, where an \emph{$M$-alternating cycle} is a simple cycle that alternates between edges in $M$ and not in $M$.

\begin{problem}[Odd Alternating Cycle (OAC)]
\mbox{ }
\begin{description}
\setlength{\itemsep}{0mm}
\item[Input:] A 0/1-weighted graph $G$ and a perfect matching $M$ in $G$.
\item[Task:] Determine whether $G$ has an $M$-alternating cycle of odd weight.
\end{description}
\end{problem}

It is not difficult to observe that the answers of the original CPM instance and of the corresponding OAC instance are the same.
If there exists an $M$-alternating cycle $C$ of odd weight, then the symmetric difference $M \triangle C = (M \setminus C) \cup (C \setminus M)$ is a desired perfect matching (with its parity different from $M$).
Conversely, if there exists a perfect matching $M'$ in $G$ with its parity different from $M$, then the symmetric difference $M \triangle M'$ forms disjoint $M$-alternating cycles having an odd weight in total, which must contain at least one $M$-alternating cycle of odd weight.

Thus, CPM and OAC are polynomial-time equivalent.
One natural way to solve OAC is testing for each edge $e \in M$ whether there exists such a cycle through $e$.
We call this subproblem the \emph{odd alternating cycle through an edge problem (OACe)}, stated as follows.

\begin{problem}[Odd Alternating Cycle through an Edge (OACe)]
\mbox{ }
\begin{description}
\setlength{\itemsep}{0mm}
\item[Input:] A 0/1-weighted graph $G$, a perfect matching $M$ in $G$, and a matching edge $e \in M$.
\item[Task:] Determine whether $G$ has an $M$-alternating cycle of odd weight and through $e$.
\end{description}
\end{problem}

Unfortunately, this problem turns out NP-hard. 
This result implies that, in order to solve CPM, we should search an odd alternating cycle not locally but globally.

\begin{theorem}\label{thm:NP-hard}
The odd alternating cycle through an edge problem is NP-hard even if the input graph is bipartite and contains exactly one (matching) edge of weight $1$.
\end{theorem}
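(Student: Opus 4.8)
The plan is to reduce from the \emph{directed two vertex-disjoint paths problem}: given a (simple) digraph $D$ and four distinct vertices $s_1, t_1, s_2, t_2$, decide whether $D$ has a directed path from $s_1$ to $t_1$ and a directed path from $s_2$ to $t_2$ that are vertex-disjoint. This problem is NP-complete (Fortune, Hopcroft, and Wyllie, 1980), and one may assume $s_1, t_1, s_2, t_2$ are pairwise distinct, introducing fresh source/sink vertices otherwise. The bridge to our setting is the classical fact that, for a bipartite graph $G$ with a perfect matching $M$, orienting every edge of $M$ from one color class to the other and every non-matching edge in the opposite direction turns the $M$-alternating cycles of $G$ into exactly the simple directed cycles of the resulting digraph, with a matching edge $g \in M$ lying on an alternating cycle iff the corresponding arc lies on the corresponding directed cycle. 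Equivalently, $G$ with this orientation is isomorphic to the digraph obtained from some digraph $\widehat{D}$ by splitting each vertex $v$ into an in-copy $v^-$ and an out-copy $v^+$ with an arc $v^- \to v^+$ (these "vertex arcs" being the images of the matching edges), and each arc $u \to v$ of $\widehat{D}$ becoming an arc $u^+ \to v^-$; simple directed cycles correspond bijectively under this vertex splitting.

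Given a two-disjoint-paths instance, I first form $\widehat{D}$ from $D$ by adding two new vertices $w_1, w_2$ and the four arcs $t_1 \to w_1$, $w_1 \to s_2$, $t_2 \to w_2$, and $w_2 \to s_1$; each of $w_1, w_2$ then has in-degree $1$ and out-degree $1$ in $\widehat{D}$. Let $G$ be the bipartite graph obtained by splitting every vertex $v$ of $\widehat{D}$ into $v^-$ and $v^+$ joined by an edge $v^-v^+$ and replacing every arc $u \to v$ by an edge $u^+v^-$, and let $M := \{\, v^-v^+ : v \in V(\widehat{D}) \,\}$, a perfect matching of $G$. Assign weight $1$ to the single edge $w_2^-w_2^+$ and weight $0$ to every other edge, and set $e := w_1^-w_1^+ \in M$. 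Since $w_2^-w_2^+$ is the only edge of weight $1$, an $M$-alternating cycle of $G$ has odd weight exactly when it passes through $w_2^-w_2^+$; combining this with the correspondence above, $(G, M, e)$ admits an odd-weight $M$-alternating cycle through $e$ if and only if $\widehat{D}$ admits a simple directed cycle through both $w_1$ and $w_2$.

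It remains to identify this last condition with the original instance being a yes-instance. If $\widehat{D}$ has such a cycle $C$, then, since $w_1$ has the unique in-arc $t_1 \to w_1$ and the unique out-arc $w_1 \to s_2$ and likewise for $w_2$, the cycle $C$ must traverse $t_1 \to w_1 \to s_2$ and $t_2 \to w_2 \to s_1$; deleting $w_1$ and $w_2$ from $C$ then leaves a directed path from $s_1$ to $t_1$ and one from $s_2$ to $t_2$, both contained in $D$ (neither can use a new arc, since that would revisit $w_1$ or $w_2$) and vertex-disjoint, because all vertices of $C$ are distinct and the four terminals are distinct. Conversely, vertex-disjoint directed paths from $s_1$ to $t_1$ and from $s_2$ to $t_2$ in $D$ splice with the four new arcs into the simple directed cycle $s_1 \to \cdots \to t_1 \to w_1 \to s_2 \to \cdots \to t_2 \to w_2 \to s_1$ of $\widehat{D}$, which passes through $w_1$ and $w_2$. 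Since the constructed $G$ is bipartite and has exactly one edge of weight $1$, which lies in $M$, this proves the claimed NP-hardness; the problem clearly lies in NP as well.

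The crux I anticipate is the forward direction of the final equivalence: arguing that an odd-weight $M$-alternating cycle through $e$ has no choice but to split into precisely the two required disjoint paths. The gadget vertices $w_1, w_2$ are tailored so that "through $e$" forces the cycle onto $w_1$ and "odd weight" forces it onto $w_2$, after which their in-/out-degree-$1$ structure removes all freedom; one still has to confirm that the two extracted subpaths are genuinely vertex-disjoint and meet the terminals as intended, which is exactly where the distinctness of $s_1, t_1, s_2, t_2$ enters. The remaining points — bipartiteness of $G$, that $w_2^-w_2^+$ is the unique weight-$1$ edge and belongs to $M$, and polynomiality of the construction — are routine.
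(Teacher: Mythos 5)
Your proof is correct and follows essentially the same route as the paper: the paper reduces directly from the NP-hard ``back and forth paths'' problem of deciding whether a digraph has a simple directed cycle through two specified vertices $s,t$ (Fortune--Hopcroft--Wyllie), performs the identical vertex-splitting construction, places the unique weight-$1$ edge on the matching edge of $t$, and takes the distinguished edge $\hat{e}$ to be the matching edge of $s$. Your only deviation is deriving that intermediate problem inline from directed two-vertex-disjoint-paths via the gadget vertices $w_1,w_2$, which is the standard folklore equivalence and introduces no gap.
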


The proof technique is based on a recent result of Schlotter and Seb\H{o}~\cite{SS2022} on the NP-hardness of finding a shortest odd path between two specified vertices in an edge-weighted undirected graph with no negative cycle.

\subsection{Organization}
The rest of this paper is organized as follows.
In Section~\ref{sec:preliminaries}, we describe necessary definitions.
In Section~\ref{sec:FPT}, we prove Theorem~\ref{thm:FPT}, and discuss heuristic speeding-up of our FPT algorithm.
In Section~\ref{sec:NP-hard}, we prove Theorem~\ref{thm:NP-hard}, and discuss further related problems.

\section{Preliminaries}\label{sec:preliminaries}
For basic concepts, terms, and notation on graphs and algorithms, see, e.g., \cite{KV2018, Schrijver2003}.

Let $G = (V, E)$ be a graph with vertex set $V$ and edge set $E$, which is not necessarily simple.
A \emph{(simple) path} $P$ in $G$ is a connected subgraph $(V(P), E(P))$ of $G$, defined by an alternating sequence of vertices and edges $(v_0, e_1, v_1, e_2, v_2, \dots, v_{\ell-1}, e_\ell, v_\ell)$ such that all vertices $v_0, v_1, \dots, v_\ell$ are distinct and $e_i = \{v_{i-1}, v_i\} \in E$ for each $i = 1, 2, \dots, \ell$, where $V(P) = \{v_0, v_1, \dots, v_\ell\}$ and $E(P) = \{e_1, e_2, \dots, e_\ell\}$.
A \emph{(simple) cycle} $C$ in $G$ is a connected subgraph $(V(C), E(C))$ of $G$, defined by an alternating sequence of vertices $(v_0, e_1, v_1, e_2, v_2, \dots, v_{\ell-1}, e_\ell, v_\ell, e_0, v_0)$ such that $P = (v_0, e_1, v_1, e_2, v_2, \dots, v_{\ell-1}, e_\ell, v_\ell)$ is a path and $e_0 = \{v_\ell, v_0\} \in E \setminus E(P)$, where $V(C) = V(P)$ and $E(C) = E(P) \cup \{e_0\}$.
We also deal with paths and cycles just as edge subsets or as vertex sequences (if there is no confusion).
For directed graphs, we define them analogously by replacing undirected edges $\{u, v\}$ between $u$ and $v$ with directed edges $(u, v)$ from $u$ to $v$.

A \emph{matching} in $G$ is a subset of edges no two of which share a vertex.
A matching is said to be \emph{perfect} if it covers all vertices.
A path or cycle is said to be \emph{$M$-alternating} if it alternates between edges in $M$ and not in $M$.

An \emph{independent set} in $G$ is a subset of vertices no two of which are adjacent in $G$.
The maximum size of an independent set in a graph is called the \emph{independence number}.
The complement of an independent set is called a \emph{vertex cover}, i.e., for a vertex cover $X$ in $G$, any edge in $G$ is incident to some vertex in $X$.

A graph is said to be \emph{bipartite} if there exists a (possibly trivial) bipartition $(A, B)$ of its vertex set such that $A$ and $B$ are both independent sets.
When we fix such a bipartition, $A$ and $B$ are called the \emph{color classes} of the bipartite graph.
Then, a \emph{balanced independent set} in the bipartite graph is an independent set $X$ such that $|X \cap A| = |X \cap B|$.
The maximum size of a balanced independent set in a bipartite graph divided by $2$ (i.e., counting the vertices in either color class) is called the \emph{bipartite independence number}.
An \emph{odd cycle transversal} in $G$ is a vertex subset $X$ such that $G - X$ is bipartite, where $G - X$ denotes the subgraph of $G$ obtained by removing the vertices in $X$ with its incident edges.

\section{An FPT Algorithm for Exact Matching Problem}\label{sec:FPT}
\subsection{Proof of Theorem~\ref{thm:FPT}}
In this section, we prove Theorem~\ref{thm:FPT}.

Let $G = (V, E)$ be a 0/1-weighted graph, $k$ be an integer, and $X \subseteq V$ be an odd cycle transversal of $G$.
Let $(A, B)$ be a pair of the two color classes of $G - X$ (if it is not unique, fix an arbitrary one).
For each subset $Y \subseteq X$, we define a bipartite graph $G_Y = (A \cup Y, B \cup (X \setminus Y); E_Y)$ as follows, which is a subgraph of $G$:
\[E_Y \coloneq \{\, \{u, v\} \in E \mid u \in A \cup Y,\ v \in B \cup (X \setminus Y) \,\}.\]
We then have the following lemma.

\begin{lemma}\label{lem:FPT}
$(G, k)$ is a yes-instance of BCPM (or EM) if and only if $(G_Y, k)$ is a yes-instance of BCPM (or EM, respectively) for some $Y \subseteq X$.
\end{lemma}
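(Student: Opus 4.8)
The plan is to prove the two directions separately, but both rest on a single structural observation: any perfect matching $M$ in $G$ is supported on the edges of exactly one of the bipartite subgraphs $G_Y$. First I would fix a perfect matching $M$ of $G$ and show how to read off the corresponding $Y \subseteq X$. Concretely, orient attention to the vertices of $X$: for each $x \in X$, the matching $M$ pairs $x$ with some neighbour $M(x)$. The vertices of $A \cup B$ are partitioned by which side they lie on, and this should force a consistent two-colouring of $X$. The clean way to see it: consider the graph $G$ and the partition $(A,B)$ of $V \setminus X$; every edge of $G$ not meeting $X$ goes between $A$ and $B$ (since $(A,B)$ is a bipartition of $G - X$), so the only edges that could be "monochromatic" are those incident to $X$. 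I would define $Y \coloneq \{\, x \in X \mid M(x) \in B \cup \{\text{vertices of } X \text{ to be placed on the } A\text{-side}\}\,\}$ — this circularity is the crux, so let me restate it properly below.

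The correct formulation: given $M$, build an auxiliary graph $H$ on vertex set $X$ where $x, x'$ are joined iff $\{x,x'\} \in M$. This $H$ is a (partial) matching on $X$, hence trivially bipartite, and more importantly each matching edge of $M$ inside $X$ must, in any valid $G_Y$, have its two endpoints on opposite sides. So I would argue as follows. Every $M$-edge is of one of three types: (i) both endpoints in $V \setminus X$ — then it is an $A$–$B$ edge automatically; (ii) one endpoint in $X$, one in $V \setminus X$; (iii) both endpoints in $X$. For $M$ to live inside some $G_Y$ we need: every $x \in X$ with an $M$-partner in $A$ must be put in $B \cup (X \setminus Y)$, i.e. $x \in X \setminus Y$; every $x$ with an $M$-partner in $B$ must be put in $A \cup Y$, i.e. $x \in Y$; and for a type-(iii) edge $\{x, x'\}$ we need $x, x'$ on opposite sides, i.e. exactly one of them in $Y$. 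Define $Y$ to contain every $x$ whose $M$-partner lies in $B$, exclude every $x$ whose $M$-partner lies in $A$, and for each type-(iii) edge put one endpoint in and one out (arbitrarily); all remaining vertices of $X$ (those unmatched by $M$ — but $M$ is perfect, so there are none, except that "type-(ii) into a vertex of $A\cup B$" and "type-(iii)" exhaust the possibilities, so every $x\in X$ falls under exactly one rule). The type-(ii) rules never conflict because each $x$ has a unique $M$-partner; the type-(iii) rule is a free choice. With this $Y$, every edge of $M$ lies in $E_Y$, so $M$ is a perfect matching of $G_Y$, and since $G_Y$ is a subgraph of $G$ it has the same weight and the same parity. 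This gives the "only if" direction for both BCPM and EM simultaneously — a perfect matching of weight $k$ (resp. weight $k' \le k$ with $k' \equiv k \pmod 2$) in $G$ yields one in some $G_Y$.

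For the "if" direction the argument is immediate: each $G_Y$ is by construction a subgraph of $G$ on the same vertex set $V = (A \cup Y) \cup (B \cup (X \setminus Y))$, so any perfect matching of $G_Y$ is a perfect matching of $G$, with identical weight. Hence a yes-certificate for $(G_Y, k)$ is a yes-certificate for $(G, k)$, for BCPM and for EM alike. Assembling the two directions proves the stated equivalence.

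The main obstacle is purely one of careful bookkeeping in the "only if" direction: making sure that the assignment of each $x \in X$ to $Y$ or to $X \setminus Y$ is well defined and consistent. The potential trap is a vertex of $X$ that is $M$-matched to another vertex of $X$ versus one matched into $A$ or into $B$; one must check these cases are mutually exclusive (they are, since $M$ is a matching so each $x$ has exactly one $M$-partner) and that the resulting $Y$ indeed places both endpoints of every $M$-edge on opposite colour classes of $G_Y$. Once that case analysis is laid out cleanly, nothing else is needed — in particular there is no need to worry about optimality or about the parameter, since Lemma~\ref{lem:FPT} is a statement about a fixed instance and a fixed $X$, and the number of subgraphs $G_Y$ is $2^{|X|}$, which is where the FPT dependence will later come from.
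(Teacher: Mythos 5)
Your proof is correct and follows essentially the same route as the paper's: the ``if'' direction is the subgraph observation, and the ``only if'' direction is the same per-edge case analysis (edge inside $V \setminus X$, edge with one endpoint in $X$ matched into $A$ or into $B$, edge with both endpoints in $X$), using that $M$ is a matching so the membership constraints on each $x \in X$ never conflict. No gaps; the meandering first paragraph is superfluous but the cleaned-up version you settle on is exactly the paper's argument.
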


\begin{proof}
Since each $G_Y$ is a subgraph of $G$, if $G_Y$ has a desired perfect matching, then so does $G$.

Conversely, we show that, for any perfect matching $M$ in $G$, there exists a subset $Y \subseteq X$ such that $G_Y$ has the same perfect matching $M$.
Since $A$ and $B$ are both independent sets in $G$ (as $G - X$ is bipartite), each edge $e = \{u, v\} \in M$ satisfies $|\{u, v\} \cap A| \leq 1$ and $|\{u, v\} \cap B| \leq 1$.

\begin{itemize}
\setlength{\itemsep}{0mm}
\item
If $|\{u, v\} \cap A| = 1$ and $|\{u, v\} \cap B| = 1$, then $e$ always exists in $G_Y$ by definition.

\item
Suppose that $|\{u, v\} \cap A| = 1$ and $|\{u, v\} \cap B| = 0$.
By symmetry, we assume $u \in A$.
In this case, $e \in E_Y$ if and only if $v \in X \setminus Y$.

\item
Suppose that $|\{u, v\} \cap A| = 0$ and $|\{u, v\} \cap B| = 1$.
By symmetry, we assume $u \in B$.
In this case, $e \in E_Y$ if and only if $v \in Y$.

\item
Suppose that $|\{u, v\} \cap A| = 0$ and $|\{u, v\} \cap B| = 0$.
In this case, $e \in E_Y$ if and only if $|\{u, v\} \cap Y| = 1$.
\end{itemize}

Since $M$ is a matching, all the end vertices of the edges in $M$ are distinct.
Therefore, we can consider the above conditions independently for each edge in $M$, and then some $Y \subseteq X$ satisfies the conditions of $e \in E_Y$ for all the edges $e \in M$.
Thus, we are done.
\end{proof}

By Lemma~\ref{lem:FPT}, we can solve BCPM by finding a minimum odd cycle transversal $X$ and by solving BCPM in the bipartite graphs $G_Y$ for all subsets $Y \subseteq X$, i.e., $2^{|X|}$ times.
Since a minimum odd cycle transversal can be found by deterministic FPT algorithms parameterized by its size (initiated by Reed, Smith, and Vetta~\cite{RSV2004}), combining with Theorem~\ref{thm:ESW2023_BCPM}, we obtain Theorem~\ref{thm:FPT}(1).

The first statement of Theorem~\ref{thm:FPT}(2) is an immediate consequence of Theorem~\ref{thm:ESW2023_reduction} and Theorem~\ref{thm:FPT}(1) applied in this order.
To see the second statement of Theorem~\ref{thm:FPT}(2), let us swap the order, i.e., we first find a minimum odd cycle transversal $X$ and apply Lemma~\ref{lem:FPT}, and then apply Theorem~\ref{thm:ESW2023_reduction} to each $G_Y$.
We then obtain an algorithm for EM, which is FPT parameterized by the maximum of the bipartite independence numbers $\beta_Y$ of $G_Y$ plus $|X|$.
As shown below, for every $Y \subseteq X$, the parameter $\beta_Y$ is bounded by the bipartite independence number $\beta$ of $G - X$ plus $|X|$, which concludes the second statement of Theorem~\ref{thm:FPT}(2).

\begin{lemma}\label{lem:BIN}
$\beta_Y \le \beta + |X|$ for any $Y \subseteq X$.
\end{lemma}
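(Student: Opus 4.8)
The plan is to bound the bipartite independence number of $G_Y$ directly in terms of balanced independent sets of $G - X$. Recall that $G_Y$ has color classes $A \cup Y$ and $B \cup (X \setminus Y)$, so a balanced independent set $Z$ in $G_Y$ of size $2\beta_Y$ splits into $\beta_Y$ vertices on each side. First I would set $Z' \coloneq Z \setminus X$, obtaining an independent set of $G_Y$ (hence of $G$, since $G_Y$ is a subgraph of $G$) that avoids $X$ entirely; thus $Z'$ is an independent set of $G - X$. The number of vertices of $Z$ removed in passing to $Z'$ is at most $|X|$, so $Z'$ contains at least $\beta_Y - |X|$ vertices from the $A$-side and at least $\beta_Y - |X|$ from the $B$-side of $G_Y$; these are exactly the $A$-side and $B$-side of $G - X$.

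The subtlety is that $Z'$ need not be balanced: it could have, say, $\beta_Y$ vertices in $A$ but only $\beta_Y - |X|$ in $B$. To fix this I would simply discard excess vertices from the larger side so that the two sides have equal size, namely $\min(|Z' \cap A|, |Z' \cap B|) \ge \beta_Y - |X|$ vertices each. This yields a balanced independent set of $G - X$ of size at least $2(\beta_Y - |X|)$, so by definition $\beta \ge \beta_Y - |X|$, which rearranges to the claimed inequality $\beta_Y \le \beta + |X|$.

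I expect the only place requiring care is the bookkeeping in the paragraph above: one must make sure that the vertices of $Z$ lying in $A$ are counted against the $A$-side of $G_Y$ (and likewise for $B$), that removing the $\le |X|$ vertices of $Z \cap X$ decreases each side-count by at most $|X|$, and that after rebalancing we are genuinely left with an independent set in $G - X$ rather than merely in $G_Y$ (this last point is automatic since independence is inherited by subgraphs and by subsets). None of this involves a real obstacle; the lemma is essentially a counting argument, and the short proof below carries it out.

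\begin{proof}
Let $Z$ be a balanced independent set in $G_Y$ with $|Z| = 2\beta_Y$, so that $|Z \cap (A \cup Y)| = |Z \cap (B \cup (X \setminus Y))| = \beta_Y$. Put $Z' \coloneq Z \setminus X$. Since $G_Y$ is a subgraph of $G$, the set $Z$ is independent in $G$, and hence $Z'$ is an independent set in $G - X$. Because $Z \cap X \subseteq X$ has at most $|X|$ elements, we have $|Z' \cap A| \ge |Z \cap (A \cup Y)| - |X| = \beta_Y - |X|$ and likewise $|Z' \cap B| \ge \beta_Y - |X|$. Let $t \coloneq \min(|Z' \cap A|, |Z' \cap B|) \ge \beta_Y - |X|$, and obtain a balanced independent set in $G - X$ of size $2t$ by deleting from the larger of $Z' \cap A$, $Z' \cap B$ enough vertices to make both sides have exactly $t$ vertices. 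By definition of the bipartite independence number, $\beta \ge t \ge \beta_Y - |X|$, i.e., $\beta_Y \le \beta + |X|$.
\end{proof}
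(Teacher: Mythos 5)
Your proposal follows essentially the same route as the paper: take a maximum balanced independent set $Z$ of $G_Y$, restrict it to $A\cup B$, rebalance by passing to the smaller side, and account for the at most $|X|$ vertices lost. The paper phrases the accounting as an exact identity, $2\beta_Y = 2\bigl(\min\{|Z\cap A|,|Z\cap B|\} + \max\{|Z\cap Y|,|Z\cap(X\setminus Y)|\}\bigr)$, using the balance condition $|Z\cap A|+|Z\cap Y|=|Z\cap B|+|Z\cap(X\setminus Y)|$, whereas you use the one-sided bounds $|Z\cap A|\ge\beta_Y-|X|$ and $|Z\cap B|\ge\beta_Y-|X|$; both yield the same conclusion.

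One justification in your write-up is stated backwards, however. You assert that $Z$ is independent in $G$ ``since $G_Y$ is a subgraph of $G$.'' Independence transfers from a graph to its subgraphs, not the other way around, and indeed $Z$ need \emph{not} be independent in $G$: two vertices of $Z\cap Y$ may be joined by an edge of $G$ that is absent from $E_Y$ (edges inside $A\cup Y$ are deleted in forming $G_Y$). The step you actually need — that $Z'=Z\setminus X$ is independent in $G-X$ — is still true, but for a different reason: since $A$ and $B$ are the color classes of $G-X$, every edge of $G-X$ joins $A$ to $B$ and therefore belongs to $E_Y$; hence $G-X$ is a subgraph of $G_Y$, and independence of $Z'$ in $G_Y$ does transfer to $G-X$. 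With that correction the argument is complete and matches the paper's.
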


\begin{proof}
Let $Z$ be a maximum balanced independent set in $G_Y$.
Then, $Z \cap (A \cup B) = (Z \cap A) \cup (Z \cap B)$ is an independent set in $G - X$, which includes a balanced independent set of size $2 \min \{|Z \cap A|, |Z \cap B|\}$ in $G - X$.
Also, as $Z$ is balanced, we have $|Z \cap A| + |Z \cap Y| = |Z \cap B| + |Z \cap (X \setminus Y)|$.
Thus,
\[2\beta_Y = |Z| = 2 \bigl(\min \{|Z \cap A|, |Z \cap B|\} + \max \{|Z \cap Y|, |Z \cap (X \setminus Y)|\}\bigr) \le {} 2(\beta + |X|). \qedhere\]
\end{proof}

\begin{remark}
The bipartite independence number after removing a minimum odd cycle transversal and the minimum size of an odd cycle transversal are somewhat correlated, but can behave arbitrarily.
For example, by starting with a bipartite graph with small bipartite independence number, choosing a few vertices, and add many edges incident to the chosen vertices, one can construct a graph with both parameters small, for which our FPT algorithm works.
\end{remark}

\subsection{Heuristic Speeding-Up}
In this section, we consider heuristic speeding-up of the algorithm of Theorem~\ref{thm:FPT}.


Let $G = (V, E)$, $k$, $X$, and $(A, B)$ the same as the beginning of Proof of Theorem~\ref{thm:FPT}, and let $n = |V|$.
We assume that $G$ has a perfect matching, since otherwise $(G, k)$ is clearly a no-instance.
For $Y \subseteq X$, if $|A| + |Y| \neq \frac{n}{2}$, then the bipartite graph $G_Y$ has no perfect matching, and hence $(G_Y, k)$ is a no-instance.
Therefore, we can solve BCPM/EM by solving BCPM/EM in the bipartite graphs $G_Y$ for all $Y \subseteq X$ with $|Y| = \frac{n}{2} - |A|$, i.e., $\binom{|X|}{\frac{n}{2}-|A|}$ times rather than $2^{|X|}$ times.
To make $\binom{|X|}{\frac{n}{2}-|A|}$ small, we consider the following problem, called the \emph{unbalanced bipartization problem (UB)}.

\begin{problem}[Unbalanced Bipartization (UB)]
\mbox{ }
\begin{description}
\setlength{\itemsep}{0mm}
\item[Input:] A graph $G$ having a perfect matching, and two integers $k$ and $l$.
\item[Task:] Determine whether $G$ has an odd cycle transversal $X\subseteq V$ satisfies the following conditions:
\begin{itemize}
\setlength{\itemsep}{1mm}
    \item $|X|\le k$.
    \item $G - X$ has two color classes $(A,B)$ such that $|A|\ge \frac{n}{2}-l$.
\end{itemize}
\end{description}
\end{problem}

\begin{lemma}\label{lem:ub}
The unbalanced bipartization problem is NP-hard.
\end{lemma}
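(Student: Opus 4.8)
The plan is to reduce from the classical vertex bipartization problem, i.e., the minimum odd cycle transversal problem without a balance requirement and without a perfect-matching promise: given a graph $H = (W, F)$ and an integer $k$, decide whether $H$ has an odd cycle transversal of size at most $k$. This is well known to be NP-hard, being the node-deletion problem for the hereditary graph property of being bipartite. Given such an instance $(H, k)$, I would build a UB instance $(G, k, l)$ as follows: let $G$ be obtained from $H$ by attaching to every vertex $w \in W$ a new pendant vertex $w'$ joined to $w$ by a new edge, and take $l = 0$. Writing $n = |V(G)| = 2|W|$, the set $\{\, \{w, w'\} \mid w \in W \,\}$ is a perfect matching of $G$, so $(G, k, 0)$ is a legitimate UB instance whose balance requirement reads $|A| \ge \frac{n}{2}$; the construction is plainly polynomial.

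For correctness, suppose first that $H$ has an odd cycle transversal $X \subseteq W$ with $|X| \le k$. Since the pendant vertices lie on no cycle of $G$, the same $X$ is an odd cycle transversal of $G$; moreover, picking a color class $P$ of the bipartite graph $H - X$ and letting $A$ be the union of $P$, all pendant vertices $w'$ with $w \in (W \setminus X) \setminus P$, and all pendant vertices $w'$ with $w \in X$, one checks directly that $A$ is an independent set of $G - X$ with $|A| = |W| = \frac{n}{2}$, which yields a bipartition of $G - X$ meeting the balance requirement. Hence $(G, k, 0)$ is a yes-instance of UB. Conversely, suppose $(G, k, 0)$ is a yes-instance, witnessed by an odd cycle transversal $X$ of $G$ with $|X| \le k$. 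Then $G - (X \cap W)$ differs from the bipartite graph $G - X$ only by some extra pendant or isolated vertices, hence is bipartite, so its induced subgraph $H - (X \cap W)$ is bipartite as well, i.e., $X \cap W$ is an odd cycle transversal of $H$ of size at most $|X| \le k$. Thus $(G, k, 0)$ is a yes-instance of UB if and only if $(H, k)$ is a yes-instance of vertex bipartization, which proves that UB is NP-hard, already for $l = 0$.

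The single point that genuinely needs care --- and the reason for the pendant gadget --- is the promise that a UB instance must possess a perfect matching, which forbids invoking vertex bipartization verbatim; the gadget endows $G$ with a canonical perfect matching while leaving the collection of odd cycle transversals, and hence the answer, unchanged, since the newly added vertices never participate in a cycle. Everything else is routine. Incidentally, the reduction is insensitive to $l$: the forward construction always realizes $|A| = \frac{n}{2} \ge \frac{n}{2} - l$ for every $l \ge 0$, and the backward direction never uses $l$.
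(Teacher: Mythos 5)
Your proof is correct and takes essentially the same approach as the paper: both reduce from the odd cycle transversal problem and use the identical pendant-vertex gadget to endow the graph with a perfect matching without changing its set of cycles. The only difference is that the paper sets $l = \frac{n}{2}$ so that the balance condition is vacuous, whereas you set $l = 0$ and explicitly exhibit a color class of size $\frac{n}{2}$, which yields the marginally stronger conclusion that UB is NP-hard already for $l = 0$.
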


\begin{proof}
We reduce to UB the \emph{odd cycle transversal problem (OCT)} stated as follows, which is NP-hard and admits FPT algorithms parameterized by $k$ (initiated by \cite{RSV2004}).

\begin{problem}[Odd Cycle Transversal (OCT)]
\mbox{ }
\begin{description}
\setlength{\itemsep}{0mm}
\item[Input:] A graph $G$ and an integer $k$.
\item[Task:] Determine whether $G$ has an odd cycle transversal $X\subseteq V$ of size at most $k$.
\end{description}
\end{problem}

If $G$ has a perfect matching, then the OCT instance $(G,k)$ reduces to an UB instance $(G,k,\frac{n}{2})$.
Thus, it suffices to prove that OCT is NP-hard even if the input graph has a perfect matching.

Suppose that we are given an OCT instance $(G = (V, E), k)$, where $G$ does not necessarily have a perfect matching.
We define a graph $G'=(V',E')$ as follows:
\begin{align*}
    V'&\coloneq V\cup \{\, v':v\in V \,\},\\
    E'&\coloneq E\cup \{\, \{v, v'\} : v\in V \,\}.
\end{align*}

Clearly, $G'$ has a perfect matching consisting of the additional edges.
Also, since the degree of each additional vertex is $1$, no cycle intersects them, and hence the two OCT instances $(G, k)$ and $(G', k)$ are equivalent.
Thus, we are done.
\end{proof}

If $l \geq \frac{k}{2}$, then the UB instance $(G, k, l)$ is equivalent to the OCT instance $(G,k)$ ($|X|\le k$ implies that $|A|\ge \frac{n-k}{2}$ or $|B|\ge \frac{n-k}{2}$).
Thus, we assume $l < \frac{k}{2}$.
A na\"{i}ve brute-force algorithm for UB takes $n^{k+O(1)}$ time.
We can design an XP algorithm with a better computational time bound as follows.

\begin{lemma}
The unbalanced bipartization problem can be solved in $2.3146^k n^{l+O(1)}$ time.
\end{lemma}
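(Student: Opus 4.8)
The plan is to combine the known FPT algorithm for \textsc{Odd Cycle Transversal} with the structure of the problem we want to solve. The standard iterative-compression algorithm for \textsc{OCT} of Reed, Smith, and Vetta runs in $O^*(3^k)$ time, and the state-of-the-art improvements (Lokshtanov, Narayanaswamy, Raman, Ramanujan, Saurabh) bring this down to $O^*(2.3146^k)$; this is the source of the base $2.3146$ in the claimed bound. The refinement we need is not merely to decide whether an odd cycle transversal of size $\le k$ exists, but to find one that additionally admits a bipartition $(A,B)$ of $G-X$ with $|A| \ge \tfrac{n}{2}-l$, i.e. with $\bigl||A|-|B|\bigr|$ large enough in the correct direction. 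So the task splits into two parts: enumerate the relevant odd cycle transversals $X$ cheaply, and for each such $X$ decide whether the connected components of $G-X$ can have their two sides assigned consistently so that the global imbalance is at least $\tfrac{n}{2}-|A_{\max}|$ on one side.

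\textbf{Step 1: enumerate candidate transversals.} First I would run the $O^*(2.3146^k)$ \textsc{OCT} machinery, but instead of stopping at the first transversal, use it to produce, for each $i \le k$, a bounded-size family of odd cycle transversals that is ``complete'' in the sense needed here. Concretely, the iterative-compression step reduces to the following: given an odd cycle transversal $W$ of size $k+1$, find one of size $\le k$. This is solved by guessing, for each of the $2^{k+1}$ subsets $W' \subseteq W$ to keep, a partition of $W\setminus W'$ into the two sides of the target bipartition, and then reducing to a single min-vertex-cut / max-flow computation between two supersources. The crucial observation is that this flow computation already knows which side of $G-X$ each vertex of $G-X$ lands on (it is determined by which source it is reachable from after the cut is removed), so we can read off the sizes of the two color classes of $G-X$ directly from the flow. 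Thus within the same $O^*(2.3146^k)$ budget we can, rather than just checking $|X|\le k$, additionally check whether $|A|\ge \tfrac n2 - l$; if so we answer yes. Summing the polynomial flow work over all $2.3146^k \cdot \mathrm{poly}(n)$ guesses keeps the time at $O^*(2.3146^k)$.

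\textbf{Step 2: optimizing the imbalance for a fixed cut.} The subtlety is that when $G-X$ is disconnected, each connected component can be ``flipped'' independently, and we want to choose the flips so that the total size of one side is maximized. For a fixed removed set $X$ (equivalently, a fixed cut in the flow network), let the components of $G-X$ have side-sizes $(a_j, b_j)$. We must decide whether $\sum_j \max\{a_j,b_j\} \ge \tfrac n2 - l$ can be beaten, or more precisely whether there is a choice of orientation per component summing to $\ge \tfrac n2 - l$; but since each component contributes either $a_j$ or $b_j$ and these are fixed, the maximum achievable is just $\sum_j \max\{a_j,b_j\}$, computable in polynomial time. So Step 2 is genuinely easy \emph{once $X$ is fixed}; the point of Step 1 is that the flow-based \textsc{OCT} algorithm hands us exactly the component structure of $G-X$ for free.

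\textbf{Step 3: the $n^{l+O(1)}$ factor and where it really comes from.} Here is the one place where a naive reading of Steps 1--2 is not quite enough, and it is the main obstacle: the $2.3146^k$ \textsc{OCT} algorithm, as usually stated, is allowed to return \emph{any} transversal of size $\le k$, and in iterative compression it shrinks a size-$(k+1)$ set to a size-$\le k$ one by a single cut computation that is only guaranteed to find \emph{some} minimum transversal separating the guessed sides. There is no reason the transversal it finds has the side-imbalance we want. To repair this I would, when $l < \tfrac k2$, additionally guess a set $S$ of up to $l$ vertices that will be forced into the smaller color class of $G-X$ (there are $\binom{n}{\le l} = n^{l+O(1)}$ choices), contract or mark them accordingly, and then run the imbalance-blind \textsc{OCT} algorithm on the modified instance; correctness follows because in a yes-instance the smaller side of $G-X$ has size $\le l$ and hence is covered by some such guess $S$, while in the modified instance any valid transversal automatically has the desired imbalance. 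This multiplies the running time by $n^{l+O(1)}$, giving the claimed $2.3146^k \cdot n^{l+O(1)}$ bound. The part that needs the most care in the full writeup is checking that this forcing of $S$ into one side is compatible with the iterative-compression invariant and does not destroy the $2.3146^k$ analysis — i.e. that the reduced instance is still an honest \textsc{OCT} instance with the same parameter $k$, on which the fast algorithm applies verbatim.
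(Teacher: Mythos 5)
Your Steps~1 and~2 already conflate two different algorithms: the flow-based iterative compression you describe is the $O^*(3^k)$ algorithm of Reed, Smith, and Vetta, whereas the $2.3146^k$ bound comes from the Vertex Cover Above LP machinery of Lokshtanov et al., which does not ``hand you the component structure of $G-X$ for free.'' But the decisive gap is in Step~3. In a yes-instance of UB the condition is $|A| \ge \frac{n}{2}-l$, which only yields $|B| \le \frac{n}{2}+l$; the smaller color class of $G-X$ can have size close to $\frac{n}{2}$, not at most $l$. So your family of guesses $S$ (at most $l$ vertices forced into the smaller class) does not cover the solution, and even for a ``correct'' guess, forcing $S$ into $B$ and running an imbalance-blind OCT algorithm does not make the returned transversal satisfy $|A| \ge \frac{n}{2}-l$: nothing in the modified instance lower-bounds $|A|$. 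The approach as written therefore fails.

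The object of size at most $l$ that can legitimately be guessed is different: fix a perfect matching $M$ of $G$ (which exists by assumption) and guess the set $F \subseteq M$ of matching edges with both endpoints outside $A$; since $A$ is independent and $|A| \ge \frac{n}{2}-l$, exactly one endpoint of each edge of $M \setminus F$ lies in $A$ and hence $|F| \le l$, which is what gives the $n^{l+O(1)}$ enumeration. This is what the paper does. After guessing $F$, the paper encodes the search for $(A,B)$ as a \emph{weighted} vertex cover instance on two copies of $V$ (weights $k+1$ and $1$) whose budget $k_F$ forces exactly one endpoint of every edge of $M\setminus F$ into $A$, reduces it to unweighted vertex cover by vertex duplication, and applies the above-guarantee algorithm of Lokshtanov et al., whose exponent is the budget minus the matching size of the constructed instance; this difference collapses to $k-|F| \le k$. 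That cancellation is where the $2.3146^k$ factor really comes from --- it is not obtained by running an OCT algorithm as a black box.
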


\begin{proof}
This proof is based on an FPT algorithm for OCT parameterized by $k$ through a reduction to the \emph{unweighted vertex cover problem (UVC)} (cf.~\cite{Cygan2015}).

\begin{problem}[Unweighted Vertex Cover (UVC)]
\mbox{ }
\begin{description}
\setlength{\itemsep}{0mm}
\item[Input:] A graph $G$ and an integer $k$.
\item[Task:] Determine whether $G$ has a vertex cover of size at most $k$.
\end{description}
\end{problem}

The \emph{weighted vertex cover problem (WVC)} is a weighted variant of UVC.

\begin{problem}[Weighted Vertex Cover (WVC)]
\mbox{ }
\begin{description}
\setlength{\itemsep}{0mm}
\item[Input:] A graph $G$, a vertex-weight function $w \colon V \to \mathbb{Z}_{\ge 0}$, and an integer $k$.
\item[Task:] Determine whether $G$ has a vertex cover of weight at most $k$.
\end{description}
\end{problem}

The proof is sketched as follows.
We first reduce UB to WVC, and then reduce WVC to UVC by the result of Niedermeier and Rossmonith~\cite{NR2003} (Lemma~\ref{non-weighted}).
Finally, we solve UVC by an FPT algorithm proposed by Lokshtanov, Narayanaswamy, Raman, Ramanujan, and Saurabh~\cite{LNRRS2014}
 (Lemma~\ref{AGVC}).

\begin{lemma}[Niedermeier and Rossmonith~\cite{NR2003}]\label{non-weighted}
Let $(G = (V, E), w, k)$ be a WVC instance.
For each vertex $v \in V$, let $C(v)$ be the set of $w(v)$ copies of $v$.
We define a graph $\tilde{G} = (\tilde{V}, \tilde{E})$ as follows:
\begin{align*}
\tilde{V} &\coloneq \bigcup_{v \in V} C(v),\\
\tilde{E} &\coloneq \{\, \{\tilde{u}, \tilde{v}\} : \{u, v\} \in E,\ \tilde{u} \in C(u),\ \tilde{v} \in C(v) \,\}.
\end{align*}
Then, the WVC instance $(G, w, k)$ and the UVC instance $(\tilde{G}, k)$ are equivalent.
\end{lemma}

\begin{lemma}[Lokshtanov et al.~\cite{LNRRS2014}]\label{AGVC}
There exists a deterministic algorithm for UVC running in $2.3146^{k - \mu} n^{O(1)}$ time, where $\mu$ is the maximum size of a matching in the graph.
\end{lemma}

Suppose that we are given a UB instance $G = (V, E)$, $k$, and $l$.
Let $M$ be a perfect matching in $G$.
For each subset $F\subseteq M$, we construct a corresponding WVC instance $(G^F, w_F, k_F)$ as follows.

A graph $G^F = (V^F, E^F)$ is defined as follows, where $V(F)$ denotes the set of end vertices of edges in $F$:
\begin{align*}
V^F_1 &\coloneq \{\, v_1 : v \in V \setminus V(F) \,\}, \\
V^F_2 &\coloneq \{\, v_2 : v \in V \,\}, \\
V^F &\coloneq V^F_1 \cup V^F_2, \\
E^F &\coloneq \{\, \{v_1, v_2\} : v \in V \setminus V(F) \,\} \cup \{\, \{u_i, v_i\} : \{u, v\} \in E,\ u_i, v_i \in V^F_i,\ i \in \{1,2\} \,\}.
\end{align*}
A vertex-weight function $w_F \colon V^F \to \mathbb{Z}_{\ge 0}$ is defined as
\[w_F(v) \coloneq
\begin{cases}
    k + 1 & (v \in V^F_1),\\
    1 & (v \in V^F_2).\\
\end{cases}\]
Let us define
\[k_F = \left(\frac{n}{2} - |F|\right)(k + 2) + k.\]
We then have the following lemma.

\begin{lemma}\label{lem:UB_WVC}
$(G,k,l)$ is a yes-instance of UB if and only if $(G^F, w_F, k_F)$ is a yes-instance of WVC for some $F \subseteq M$ with $|F| \le l$.
\end{lemma}

\begin{proof}
Suppose that $G$ has a desired odd cycle transversal $X\subseteq V$.
Then, $G - X$ has two color classes $(A, B)$ such that $|A|\ge \frac{n}{2} - l$.
Let $F = \{\, e : e\in M,\ e \cap A = \emptyset \,\}$, $A' = \{\, v_1 : v\in A \,\} \subseteq V_1^F$, and $B'=\{\, v_2:v\in B \,\} \subseteq V_2^F$.
As $A'\cup B'$ is an independent set in $G^F$, its complement $V^F \setminus (A'\cup B')$ is a vertex cover in $G^F$.
Since $|A'| = |A| = |M| - |F| = \frac{n}{2} - |F|$ and $|B'| = |B| = n - |X| - |A| \geq \frac{n}{2} - k + |F|$, its weight is
\begin{align*}
    w_F(V^F)-w_F(A'\cup B') &= (|V_1^F| - |A'|)(k + 1) + |V_2^F| - |B'|\\
    &\le \left((n-2|F|) - \left(\frac{n}{2}-|F|\right)\right)(k+1) + n -\left(\frac{n}{2}-k+|F|\right)\\
    &= k_F.
\end{align*}
Thus, $V^F \setminus (A'\cup B')$ is a desired vertex cover.
As $|A| \ge \frac{n}{2} - l$, we have $|F| = |M| - |A| \le l$.

Conversely, suppose that $G^F$ has a desired vertex cover $X$ for some $F \subseteq M$ with $|F| \le l$.
We first show that
\begin{align}
\label{equation:size_V1}
|X\cap V_1^F|=\frac{n}{2}-|F|.
\end{align}

Since $V_1^F$ has a perfect matching corresponding to $M \setminus F$, we have $|X\cap V_1^F|\ge\frac{n}{2}-|F|$.
To derive a contradiction, suppose that $|X\cap V_1^F| > \frac{n}{2}-|F|$.
Since $V_2^F$ also has a perfect matching corresponding to $M$, we have $|X\cap V_2^F|\ge \frac{n}{2}$.
Thus, we obtain
\begin{align*}
    w_F(X) &= w_F(X\cap V_1^F) + w_F(X\cap V_2^F)\\
    &\ge \left(\frac{n}{2} - |F| + 1\right)(k + 1) + \frac{n}{2}\\
    &= k_F + |F| + 1,
\end{align*}
which contradicts $w_F(X) \le k_F$.
Thus, we have \eqref{equation:size_V1}.

From \eqref{equation:size_V1} and $w_F(X)\le k_F$, we obtain
\begin{align}
    |X\cap V_2^F| \le k_F-w_F(X\cap V_1^F) = \frac{n}{2}-|F|+k. \label{equation:size_V2}
\end{align}

Let $Z = V^F \setminus X$, $A' = Z \cap V^F_1$, and $B' = Z \cap V^F_2$.
Let $A$ and $B$ denote the vertex subsets in $G$ corresponding to $A'$ and $B'$, respectively.
We show that $V \setminus (A\cup B)$ is a desired odd cycle transversal in the UB instance $(G, k, l)$.

Since $X$ is a vertex cover in $G^F$, its complement $Z$ is an independent set in $G^F$.
Since $G^F$ has an edge $\{v_1, v_2\}$ for every $v \in V \setminus V(F)$, we have $A \cap B = \emptyset$.
In addition, both $A$ and $B$ are independent sets in $G$, and hence $V \setminus (A\cup B)$ is an odd cycle transversal in $G$.
From \eqref{equation:size_V1} and $|F|\le l$, we obtain
\begin{align*}
|A| = |V^F_1|-|X\cap V^F_1| = \frac{n}{2}-|F| \ge \frac{n}{2} - l. 
\end{align*}
Combined with \eqref{equation:size_V2}, we have
\begin{align*}
|V \setminus (A\cup B)| &= |V|-|A|-|B|\\
&= |V|-|A|-(|V^F_2|-|X\cap V^F_2|)\\
&\le n-\left(\frac{n}{2}-|F|\right)-\left(n-\left(\frac{n}{2}-|F|+k\right)\right)\\
&= k. 
\end{align*}
Thus, $V \setminus (A \cup B)$ is a desired odd cycle transversal.
\end{proof}

By Lemma~\ref{lem:UB_WVC}, we can solve a UB instance $(G, k, l)$ by solving the WVC instances $(G_F,w,k_F)$ for all subsets $F\subseteq M$ with $|F| \le l$, i.e., $O(n^l)$ times.
By Lemma~\ref{non-weighted}, we can define the corresponding UVC instance $(\tilde{G}^F = (\tilde{V}^F, \tilde{E}^F), k_F)$ as follows:
\begin{align*}
\tilde{V}^F &\coloneq \bigcup_{v \in V^F} C(v),\\
\tilde{E}^F &\coloneq \{\, \{\tilde{u}, \tilde{v}\} : \{u, v\} \in E^F,\ \tilde{u} \in C(u),\ \tilde{v} \in C(v) \,\}.
\end{align*}

As $G^F$ has a perfect matching $M^F = \{\, \{u_1, v_1\} : \{u, v\} \in M \setminus F \,\}\cup \{\, \{u_2, v_2\} : \{u, v\} \in M \,\}$, $\tilde{G}^F$ has a corresponding perfect matching $\tilde{M}^F$. 
By Lemma~\ref{AGVC}, we can solve each UVC instance $(G'_F,k_F)$ in $2.3146^{k_F-|\tilde{M}^F|} n^{O(1)}$ time, where
\begin{align*}
k_F - |\tilde{M}^F| &=\left(\frac{n}{2}-|F|\right)(k+2)+k-\left(\left(\frac{n}{2}-|F|\right)(k+1)+\frac{n}{2}\right)\\
&= k - |F|\\
&\le k.
\end{align*}
Thus, we are done.
\end{proof}

\section{NP-hardness of Related Problems}\label{sec:NP-hard}
\subsection{Proof of Theorem~\ref{thm:NP-hard}}
In this section, we prove Theorem~\ref{thm:NP-hard}.
The proof is similar to that of \cite[Theorem~4.3]{SS2022}.
We reduce to OACe the \emph{back and forth paths problem (BFP)} stated as follows, which is known to be NP-hard~\cite{FHW1980}.

\begin{problem}[Back and Forth Paths (BFP)]
\mbox{ }
\begin{description}
\setlength{\itemsep}{0mm}
\item[Input:] A directed graph $G$ and two vertices $s$ and $t$.
\item[Task:] Determine whether $G$ has a simple cycle that contains both $s$ and $t$.
\end{description}
\end{problem}

Suppose that we are given a BFP instance $G = (V, E)$ and $s, t \in V$.
We construct the corresponding OACe instance, i.e., a 0/1-weighted graph $\hat{G} = (\hat{V}, \hat{E})$, a perfect matching $\hat{M}$, and $\hat{e} \in \hat{M}$ as follows.

We split each $v \in V$ into two vertices, \emph{in-copy} $v^-$ and \emph{out-copy} $v^+$.
For each directed edge $(u, v) \in E$, we create an edge $\{u^+, v^-\}$ of weight $0$.
For each vertex $v \in V$, we create an edge $\{v^-, v^+\}$, whose weight is $1$ if $v = t$ and $0$ otherwise.
Define $\hat{M}$ as those edges $\{v^-, v^+\} \ (v \in V)$, and let $\hat{e}$ be $\{s^-, s^+\}$.
To sum up, we define
\begin{align*}
\hat{V} &\coloneq \{\, v^- : v\in V \,\} \cup \{\, v^+ : v\in V \,\},\\
\hat{E} &\coloneq \{\, \{u^+, v^-\} : (u, v) \in E \,\} \cup \{\, \{v^-, v^+\} : v\in V \,\},\\
\hat{M} &\coloneq \{\, \{v^-, v^+\} : v\in V \,\},\\
\hat{e} &\coloneq \{s^-, s^+\},
\end{align*}
where only $\{t^-, t^+\}$ has weight $1$ .
Note that this construction satisfies the additional conditions (bipartiteness and uniqueness of an edge of weight 1) in Theorem~\ref{thm:NP-hard}.

We show that there exists a simple cycle containing $s$ and $t$ in $G$ if and only if there exists an $\hat{M}$-alternating cycle of odd weight through $\hat{e} = \{s^-, s^+\}$ in $\hat{G}$.

Suppose that there exists a simple cycle $C = (s, v_1, v_2, \dots, t, \dots, v_\ell, s)$ containing $s$ and $t$ in $G$.
Then, there exists a corresponding simple cycle $\hat{C} = (s^-, s^+, v_1^-, v_1^+, v_2^-, v_2^+, \dots, t^-, t^+, \dots, v_\ell^-, v_\ell^+, s^-)$ in $\hat{G}$, which is clearly $\hat{M}$-alternating, of weight $1$ (due to the edge $\{t^-, t^+\}$), and through $\hat{e} = \{s^-, s^+\}$.

Conversely, suppose that there exists an $\hat{M}$-alternating cycle $\hat{C}$ of odd weight through $\hat{e} = \{s^-, s^+\}$ in $\hat{G}$.
Since $\{t^-, t^+\}$ is the only edge having weight $1$, it must be traversed by $\hat{C}$.
In addition, since the graph $\hat{G}$ is bipartite and the two color classes correspond to the signs, $\hat{C}$ alternately intersects $u^-$ and $v^+$ for some vertices $u, v \in V$.
That is, $\hat{C}$ is a form of 
\[(s^-, s^+, v_1^-, v_1^+, v_2^-, v_2^+, \dots, t^-, t^+, \dots, v_\ell^-, v_\ell^+, s^-),\]
and hence there exists a corresponding simple cycle $C = (s, v_1, v_2, \dots, t, \dots, v_\ell, s)$ in $G$.
Thus, we are done.

\subsection{Further Related Problems}
In this section, we discuss several more problems related to OACe.

For a matching $M$, an \emph{$M$-augmenting path} is a simple path between two unmatched vertices that alternates between edges in $M$ and not in $M$.
This is an elementary but crucial structure in the maximum matching problem.
In particular, a matching $M$ is not of maximum size if and only if there exists an $M$-augmenting path.

OACe reduces to the \emph{odd augmenting path problem (OAP)} stated as follows; thus, OAP is also NP-hard.

\begin{problem}[Odd Augmenting Path (OAP)]
\mbox{ }
\begin{description}
\setlength{\itemsep}{0mm}
\item[Input:] A 0/1-weighted graph $G$ and a matching $M$ in $G$.
\item[Task:] Determine whether $G$ has an $M$-augmenting path of odd weight.
\end{description}
\end{problem}

\begin{lemma}\label{lem:OAP}
OACe reduces to OAP.
\end{lemma}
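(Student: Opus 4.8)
The plan is to turn the cyclic structure of OACe into the path structure of OAP by ``cutting the graph open at $e$'' and gluing on a tiny gadget that supplies the two free endpoints an augmenting path needs while also carrying the weight of $e$, so that parities line up. Write $e = \{u,v\}$; since $M$ is a perfect matching, $e$ is the unique edge of $M$ incident to either $u$ or $v$. Given the OACe instance $(G, M, e)$, I would construct the OAP instance $(G', M')$ as follows: delete $e$ from $G$, add two new vertices $a$ and $b$, add an edge $\{u, a\}$ of the same weight as $e$ together with an edge $\{a, b\}$ of weight $0$, and set $M' \coloneq (M \setminus \{e\}) \cup \{\{u,a\}\}$. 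This is plainly a polynomial-time construction producing a 0/1-weighted graph together with a (not necessarily perfect) matching.

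Next I would show that $(G, M, e)$ is a yes-instance of OACe if and only if $(G', M')$ is a yes-instance of OAP. The only $M'$-unmatched vertices are $v$ and $b$, so every $M'$-augmenting path joins them; and since in $G'$ the vertex $b$ has degree $1$ and $a$ has degree $2$, such a path necessarily begins $b, a, u$ and then runs through edges of $G$ other than $e$ to $v$. The alternation condition, together with the facts that $\{u,a\}$ is the only edge of $M'$ incident to $u$ and that $M'$ agrees with $M$ on all remaining edges, forces the path to leave $u$ along a non-matching edge and to reach $v$ along a non-matching edge. Hence the $u$-to-$v$ portion of any $M'$-augmenting path is exactly an $M$-alternating path from $u$ to $v$ in $G$ that avoids $e$ and whose first and last edges are non-matching --- equivalently, $P \cup \{e\}$ is an $M$-alternating cycle through $e$ in $G$ --- and conversely every such cycle $C$ yields the $M'$-augmenting path that runs $b, a, u$ and then along $C - e$ to $v$. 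A one-line weight count shows that the weight of this $M'$-augmenting path equals $0 + w(e) + w(P) = w(P \cup \{e\})$, so it has odd weight precisely when the corresponding cycle does. Combining this equivalence with the NP-hardness of OACe (Theorem~\ref{thm:NP-hard}) gives the NP-hardness of OAP.

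The delicate points, and where I expect the real work to lie, are making the correspondence exact at both ends. First, one must verify that the gadget genuinely forces every $M'$-augmenting path to traverse $b, a, u$ and nothing else before re-entering $G$, so that no spurious augmenting paths arise (including in the multigraph case, where there may be parallel edges between $u$ and $v$). Second, one must check that the alternation at the ``seam'' vertex $u$ --- where the incident matching edge has been swapped from $e$ to $\{u,a\}$ --- matches precisely the alternation of an $M$-alternating cycle at $u$, which is exactly why $\{u,a\}$ rather than $\{a,b\}$ is placed in $M'$. Third, one must confirm that assigning $\{u,a\}$ the weight of $e$ (rather than $0$) is the adjustment needed so that ``odd augmenting path'' translates to ``odd alternating cycle through $e$'' uniformly, regardless of whether $w(e)$ is $0$ or $1$.
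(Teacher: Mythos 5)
Your reduction is correct and rests on the same core observation as the paper's proof: deleting $e=\{u,v\}$ turns an $M$-alternating cycle of odd weight through $e$ into an $M'$-augmenting path of odd weight between the vertices freed by the deletion, and conversely. The only difference is bookkeeping for $w(e)$: the paper keeps $u$ and $v$ as the two unmatched endpoints and, when $w(e)=1$, flips the weight of every edge incident to $u$ (each augmenting path uses exactly one such edge), whereas you attach a pendant path $b$--$a$--$u$ with $\{u,a\}\in M'$ carrying the weight of $e$; both devices work and yield the same conclusion.
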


\begin{proof}
Suppose that we are given an OACe instance, i.e., a 0/1-weighted graph $G = (V, E)$, a perfect matching $M$, and a matching edge $e = \{u, v\} \in M$.
Consider a graph $G' = (V, E \setminus \{e\})$ and a matching $M' = M \setminus \{e\}$ in $G'$.
In addition, if the weight of $e$ is $1$, then we switch the weight (change it from $0$ to $1$ and from $1$ to $0$) of every edge incident to $u$ in $G'$.
Then, $G$ has an $M$-alternating cycle of odd weight and through $e$ if and only if $G'$ has an $M'$-augmenting path of odd weight.
The former is indeed transformed into the latter just by removing $e$ (note that both traverse exactly one edge incident to $u$ in $G'$, which preserves the parity of their weights).
Also, since $u$ and $v$ are the only unmatched vertices with respect to $M'$, the latter is transformed into the former by adding $e$.
Thus, we are done.
\end{proof}

The proof and Theorem~\ref{thm:NP-hard} imply the following.

\begin{corollary}\label{cor:OAP}
The odd augmenting path problem is NP-hard even if the input graph is bipartite and contains exactly one (matching) edge of weight $1$ and the input matching is of size $\frac{n}{2} - 1$, where $n$ is the number of vertices.
\end{corollary}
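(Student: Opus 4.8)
The plan is to compose the two reductions already in hand: the NP-hardness of OACe established in Theorem~\ref{thm:NP-hard}, and the reduction from OACe to OAP in Lemma~\ref{lem:OAP}. Starting from a BFP instance $(G, s, t)$ with $s \neq t$ (which we may assume, since a cycle through two coinciding vertices is trivial), I would apply the construction in the proof of Theorem~\ref{thm:NP-hard} to obtain an OACe instance $(\hat{G}, \hat{M}, \hat{e})$, and then apply the reduction of Lemma~\ref{lem:OAP} to obtain an OAP instance $(G', M')$. Both steps run in polynomial time and preserve yes/no answers, so the composition is a polynomial-time reduction from the NP-hard problem BFP to OAP; hence OAP is NP-hard.

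It then remains to verify that the OAP instance produced satisfies the three extra conditions. For bipartiteness: $\hat{G}$ is bipartite with color classes $\{\, v^- : v \in V \,\}$ and $\{\, v^+ : v \in V \,\}$, and the reduction of Lemma~\ref{lem:OAP} only deletes the edge $\hat{e}$ (and, in one branch, flips some weights, which does not affect the graph structure), so $G' = (\hat{V}, \hat{E} \setminus \{\hat{e}\})$ is a subgraph of $\hat{G}$ and is bipartite. For the unique weight-$1$ edge: in $\hat{G}$ the only edge of weight $1$ is the matching edge $\{t^-, t^+\} \in \hat{M}$; since $s \neq t$ the edge $\hat{e} = \{s^-, s^+\}$ has weight $0$, so the weight-flipping branch of Lemma~\ref{lem:OAP} is not triggered, and $G'$ still has exactly one edge of weight $1$, which is the matching edge $\{t^-, t^+\} \in \hat{M} \setminus \{\hat{e}\} = M'$. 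For the matching size: $\hat{M}$ is a perfect matching of $\hat{G}$, so $|\hat{M}| = |\hat{V}|/2$; hence $|M'| = |\hat{M}| - 1 = |\hat{V}|/2 - 1 = n/2 - 1$, where $n = |\hat{V}|$ is the number of vertices of $G'$.

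I do not anticipate any real obstacle: every structural property is immediate from the explicit constructions, and the substantive content---correctness of the two reductions---is already contained in Theorem~\ref{thm:NP-hard} and Lemma~\ref{lem:OAP}. The only point worth stating explicitly is that the weight-flipping step of Lemma~\ref{lem:OAP} stays dormant on these instances (because $\hat{e}$ has weight $0$), which is what lets the ``exactly one weight-$1$ edge'' condition pass through the second reduction intact.
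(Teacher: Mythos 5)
Your proposal is correct and follows exactly the route the paper intends: the paper's entire justification for Corollary~\ref{cor:OAP} is the single remark that it follows from Lemma~\ref{lem:OAP} together with the instances constructed in the proof of Theorem~\ref{thm:NP-hard}, and you have simply spelled out the verification of the three extra conditions. Your observation that the weight-flipping branch of Lemma~\ref{lem:OAP} is never triggered (since $\hat{e} = \{s^-, s^+\}$ has weight $0$ when $s \neq t$) is the right detail to make explicit.
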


OAP with this additional conditions reduces to the \emph{disjoint augmenting path problem (DAP)} stated as follows; thus, DAP is also NP-hard.

\begin{problem}[Disjoint Augmenting Path (DAP)]
\mbox{ }
\begin{description}
\setlength{\itemsep}{0mm}
\item[Input:] A graph $G$, a matching $M$ in $G$, and unmatched vertices $s_1, s_2, t_1, t_2$.
\item[Task:] Determine whether $G$ has disjoint $M$-augmenting paths between $\{s_1, s_2\}$ and $\{t_1, t_2\}$.
\end{description}
\end{problem}

The reduction is easy.
Let $s_1, s_2$ be the original unmatched vertices, let $t_1, t_2$ be the end vertices of the unique matching edge $e$ of weight $1$, and remove $e$.

\begin{corollary}
The disjoint augmenting path problem is NP-hard even if the input graph is bipartite.
\end{corollary}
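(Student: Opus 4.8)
The plan is to derive the NP-hardness of DAP directly from Corollary~\ref{cor:OAP}, which already gives us an OAP instance on a bipartite $0/1$-weighted graph with exactly one matching edge $e$ of weight $1$ and an input matching of size $\frac{n}{2}-1$. Since such an instance has exactly two unmatched vertices, call them $s_1,s_2$, every $M$-augmenting path must run between $s_1$ and $s_2$. The only way its weight can be odd is if it uses the unique weight-$1$ edge $e=\{t_1,t_2\}$; conversely, every $M$-augmenting path through $e$ has odd weight. So the OAP instance is a yes-instance if and only if there is an $M$-augmenting path from $\{s_1,s_2\}$ to $\{s_1,s_2\}$ that passes through $e$.

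The key step is then to turn ``augmenting path through the edge $e$'' into ``two disjoint augmenting paths with prescribed endpoints'' by deleting $e$. Concretely, I would set $G' = (V, E \setminus \{e\})$ and $M' = M$ (note $e \notin M$ is impossible here since $e$ is a matching edge — so actually $M' = M \setminus \{e\}$ and now $t_1,t_2$ become unmatched as well, giving four unmatched vertices $s_1,s_2,t_1,t_2$). An $M$-augmenting path in $G$ from $s_1$ to $s_2$ through $e=\{t_1,t_2\}$ is split by removing $e$ into two subpaths, one from $s_1$ to (say) $t_1$ and one from $t_2$ to $s_2$; both are $M'$-alternating, both start and end at unmatched vertices of $M'$, and they are vertex-disjoint because the original path was simple. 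Hence they form disjoint $M'$-augmenting paths between $\{s_1,s_2\}$ and $\{t_1,t_2\}$. Conversely, given disjoint $M'$-augmenting paths between $\{s_1,s_2\}$ and $\{t_1,t_2\}$ in $G'$, their union together with the edge $e$ reconnects into a single path from $s_1$ to $s_2$ in $G$ that traverses $e$; it is $M$-alternating because $e \in M$ and the two pieces end at the $M'$-unmatched (hence reachable-by-a-non-matching-edge) vertices $t_1,t_2$, and it is simple by disjointness.

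Putting these two observations together: the DAP instance $(G', M', s_1, s_2, t_1, t_2)$ is a yes-instance precisely when the OAP instance of Corollary~\ref{cor:OAP} is, so DAP is NP-hard, and the graph $G'$ inherits bipartiteness from the OAP instance (deleting an edge preserves it). Since DAP as stated is an unweighted problem, the weights in the OAP instance are simply discarded once we have used them to force the path through $e$ — there is nothing left to check on the DAP side beyond disjointness and the endpoint conditions.

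The only mildly delicate point, and the one I would write out carefully, is the parity/alternation bookkeeping when gluing and cutting at $e$: one must check that the two pieces obtained by deleting $e$ are genuinely $M'$-augmenting (i.e. that their first and last edges are non-matching edges of $G'$ incident to $t_1$ and $t_2$), which holds because in the original path the edges adjacent to $e$ are non-matching (an $M$-alternating path cannot use two matching edges consecutively), and symmetrically that re-inserting $e$ yields an $M$-alternating — not just alternating — path. This is routine but is where an error would hide, so it deserves an explicit sentence rather than a hand-wave. Everything else is immediate from Corollary~\ref{cor:OAP} and the structure of augmenting paths.
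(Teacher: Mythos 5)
Your proposal is correct and is essentially the paper's own reduction: take the OAP instance guaranteed by Corollary~\ref{cor:OAP}, let $s_1,s_2$ be the two unmatched vertices and $t_1,t_2$ the endpoints of the unique weight-$1$ matching edge $e$, and delete $e$. The paper states this in one line and omits the cut-and-glue verification, which you carry out correctly, so there is nothing to change.
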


In contrast, a relaxed problem, the \emph{free disjoint augmenting path problem (FDAP)} stated as follows, is polynomial-time solvable.

\begin{problem}[Free Disjoint Augmenting Path (FDAP)]
\mbox{ }
\begin{description}
\setlength{\itemsep}{0mm}
\item[Input:] A graph $G$, a matching $M$ in $G$, and unmatched vertices $s_1, s_2, t_1, t_2$.
\item[Task:] Determine whether $G$ has disjoint $M$-augmenting paths with end vertices in $\{s_1, s_2, t_1, t_2\}$.
\end{description}
\end{problem}

\begin{lemma}
FDAP can be solved by a deterministic polynomial-time algorithm.
\end{lemma}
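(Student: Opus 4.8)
The plan is to reduce FDAP to a single perfect matching computation on an induced subgraph. Write $S\coloneq\{s_1,s_2,t_1,t_2\}$ (which we may assume to be four distinct vertices), let $V(M)$ denote the set of vertices covered by $M$, and put $W\coloneq V(M)\cup S$. I would prove the following characterization: $(G,M,s_1,s_2,t_1,t_2)$ is a yes-instance of FDAP if and only if the subgraph $G[W]$ of $G$ induced by $W$ has a perfect matching. Since a perfect matching can be found in deterministic polynomial time by Edmonds' algorithm, this equivalence immediately yields the claimed algorithm.

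For the forward direction, suppose $P_1,P_2$ are vertex-disjoint $M$-augmenting paths with all end vertices in $S$. Their four end vertices are then distinct, so they are exactly the vertices of $S$; moreover every internal vertex of an $M$-augmenting path is covered by $M$, so $V(P_1)\cup V(P_2)\subseteq W$. I would then check that $M'\coloneq M\triangle(P_1\cup P_2)$ is a matching whose set of covered vertices is exactly $W$: it covers every vertex of $V(M)$ (each such vertex is either avoided by both paths or an internal vertex of one of them, staying covered in either case), it covers every vertex of $S$ (each is an end vertex of one path), and all of its edges have both ends in $V(M)\cup V(P_1)\cup V(P_2)\subseteq W$. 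Hence $M'$ is a perfect matching of $G[W]$.

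For the backward direction, suppose $N$ is a perfect matching of $G[W]$, and consider $M\triangle N$, a vertex-disjoint union of $M$-alternating paths and cycles. The key observation is that the end vertices of its path components are precisely the vertices covered by exactly one of $M$ and $N$; since both $M$ and $N$ cover $V(M)$ while $N$ covers in addition exactly the vertices of $S$ (which are uncovered by $M$), these are exactly the four vertices of $S$. As each vertex of $S$ has degree one in $M\triangle N$, it is an end vertex of exactly one path component, so $M\triangle N$ has exactly two path components $P_1,P_2$, each with both end vertices in $S$; each $P_i$ joins two $M$-uncovered vertices and alternates, hence is an $M$-augmenting path. Discarding the cycle components of $M\triangle N$, the paths $P_1$ and $P_2$ witness a yes-instance of FDAP.

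The step requiring the most care is this backward direction: one must argue that every path component of $M\triangle N$ has both end vertices in $S$ and that there are therefore exactly two such components (ruling out, e.g., several shorter augmenting paths with end vertices elsewhere). This is exactly where it matters that $N$ covers $W=V(M)\cup S$ as an equality rather than a mere superset, together with the fact that $s_1,s_2,t_1,t_2$ are uncovered by $M$; the cycle components, if any, are harmless and are simply dropped. The remaining parts of the argument are routine bookkeeping on symmetric differences of matchings.
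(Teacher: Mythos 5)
Your proof is correct and follows essentially the same route as the paper: your induced subgraph $G[W]$ with $W = V(M)\cup\{s_1,s_2,t_1,t_2\}$ is exactly the paper's graph $G'$ obtained by deleting the unmatched vertices outside $\{s_1,s_2,t_1,t_2\}$, and both arguments rest on the equivalence between a perfect matching of that subgraph and the two disjoint augmenting paths via the symmetric difference $M \triangle N$. You simply carry out the bookkeeping on the components of $M \triangle N$ more explicitly than the paper does.
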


\begin{proof}
Let $G'$ be a subgraph of $G$ obtained by removing the unmatched vertices other than $s_1, s_2, t_1, t_2$.
If $G'$ has a perfect matching $M'$, then the symmetric difference $M \triangle M'$ contains desired disjoint $M$-augmenting paths.
Conversely, if $G$ contains desired disjoint $M$-augmenting paths, so does $G'$, which implies that $G'$ has a perfect matching.
Thus, it suffices to solve the maximum matching problem in $G'$, for which a deterministic polynomial-time algorithm exists.
\end{proof}

\section*{Acknowledgments}
The authors are grateful to anonymous reviewers for their helpful comments.
The second author was supported by JSPS KAKENHI Grant Numbers 20K19743 and 20H00605.


\end{document}